\documentclass[11pt]{article}
\usepackage{fullpage}
\usepackage[utf8]{inputenc}
\usepackage{amssymb}
\usepackage{amsthm}
\usepackage{amsmath}
\usepackage{mathtools}
\usepackage{bbm}
\usepackage{algpseudocode}
\usepackage[nodayofweek,level]{datetime}
\usepackage{graphicx}
\usepackage{url}


\usepackage{amssymb}
\usepackage{amsthm}
\usepackage{amsmath}
\usepackage{mathtools}
\usepackage{bbm}
\usepackage{wasysym}

\theoremstyle{plain}
\newtheorem{theorem}{Theorem}
\newtheorem{lemma}{Lemma}
\newtheorem{proposition}{Proposition}

\theoremstyle{definition}
\newtheorem{definition}{Definition}

\theoremstyle{remark}
\newtheorem{remark}{Remark}

\DeclarePairedDelimiter{\floor}{\lfloor}{\rfloor}
\DeclarePairedDelimiter{\abs}{\lvert}{\rvert}

\DeclarePairedDelimiter{\set}{\{}{\}}

\newcommand{\ind}[1]{\mathbbm{1}\left\{#1\right\}}
\newcommand{\var}{\mathbb{V}}
\newcommand{\pr}{\mathbb{P}}
\newcommand{\E}{\mathbb{E}}

\newcommand{\fr}[2]{\mbox{$\frac{#1}{#2}$}}
\newcommand{\ignore}[1]{}
\newcommand{\bydef}{\stackrel{\mathbf{def}}{=}}

\newcommand{\Z}{\mathbb{Z}}

\newcommand{\PCSA}{\textsf{PCSA}}
\newcommand{\HyperLogLog}{\textsf{HyperLogLog}}
\newcommand{\LogLog}{\textsf{LogLog}}
\newcommand{\Min}{\textsf{Min}}
\newcommand{\tGRA}{\textsf{$\tau$-GRA}}
\newcommand{\GRA}[1]{{\ensuremath{#1}\textsf{-GRA}}}


\newcommand{\MVP}{\textsf{MVP}}

\newcommand{\Blasiok}{B\l{}asiok}
\newcommand{\Martingale}{\textsf{Martingale}}
\newcommand{\Fish}{\textsf{Fish}}
\newcommand{\Fishmonger}{\textsf{Fishmonger}}

\newcommand{\RandomOracle}{{\sc RandomOracle}}
\newcommand{\Standard}{{\sc Standard}}

\title{Simpler and Better Cardinality Estimators\\ for \HyperLogLog{} and \PCSA\thanks{This work was supported by NSF Grants CCF-1815316 and CCF-2221980.}}

\author{Seth Pettie\\ University of Michigan \\pettie@umich.edu
\and 
Dingyu Wang\\ University of Michigan \\ wangdy@umich.edu}

\date{}

\begin{document}
\maketitle

\begin{abstract}
\emph{Cardinality Estimation} (aka \emph{Distinct Elements}) is a classic problem in sketching with many industrial applications.  Although sketching \emph{algorithms} 
are fairly simple, analyzing the cardinality \emph{estimators} is notoriously difficult,
and even today the state-of-the-art sketches such as \HyperLogLog{} 
and (compressed) \PCSA{} are not covered in graduate level Big Data courses.

\medskip 

In this paper we define a class of \emph{generalized remaining area} (\tGRA) estimators, and observe that
\HyperLogLog, \LogLog, and some estimators for \PCSA{} are merely instantiations of \tGRA{} for various 
integral values of $\tau$.  
We then analyze the limiting relative variance of \tGRA{} estimators.  It turns out that the standard estimators
for \HyperLogLog{} and \PCSA{} can be improved by choosing a \emph{fractional} value of $\tau$.  
The resulting estimators come \emph{very} close to the Cram\'{e}r-Rao lower bounds 
for \HyperLogLog{} and \PCSA{} derived from their Fisher information.
Although the Cram\'{e}r-Rao lower bound \emph{can} be achieved with the Maximum Likelihood Estimator (MLE),
the MLE is cumbersome to compute and dynamically update.  
In contrast, \tGRA{} estimators are trivial to update in constant time.

\medskip

Our presentation assumes only basic calculus and probability, not any complex analysis~\cite{FlajoletM85,DurandF03,FlajoletFGM07}.

\ignore{
Cardinality estimation is the problem of counting number of distinct elements in a stream probabilistically. In 2003, Durand and Flajolet introduces the \LogLog{} sketch with the geometric mean estimator (LL). Later in 2007, Flajolet et al.~discover that the harmonic mean has significantly lower variance and name it as \HyperLogLog{} (HLL), which is the most celebrated cardinality estimation algorithm in practice. We present a natural class of estimators called \emph{generalized remaining area estimator with exponent $\tau$} ($\tau$-GRA) for the \LogLog{} sketch, where $\tau$ is a real number. $\tau$-GRA elegantly generalizes the \LogLog{} and \HyperLogLog{} estimators 
in the sense that $\hat{\lambda}_{\operatorname{DF}}$ is 
the $\GRA{0}$ and $\hat{\lambda}_{\operatorname{FFGM}}$ 
is the $\GRA{1}$. 

The analysis of estimators like HLL and LL was notoriously complicated which typically involves tens of pages of complex analysis. We present a clean and simple analysis for $\tau$-GRA estimators by looking at the asymptotic region (when both the number of subsketches and the cardinality is large) and using the ``smoothing'' technique introduced by Pettie and Wang in 2021. The exact analytic expression for the relative variance of $\tau$-GRA is obtained (thus the relative variances of LL and HLL can be quickly calculated by inserting $\tau=0$ and $\tau=1$ into the formula).

Furthermore, HLL is NOT the optimal $\tau$-GRA estimator! Let $\tau^*$-GRA be the optimal estimator in the class where $\tau^*\approx 0.889897$. It turns out that $\tau^*$-GRA beats HLL (i.e.~1-GRA) and its relative variance gets significantly closer to the Cramer Rao lower bound of the \LogLog{} sketch.

The same idea can also be applied to Probabilistic Counting (PCSA). $\tau$-GRA estimators generalizes estimators like the ``coupon collection'' estimator proposed by Lang in 2017. The optimal $\tau^*$-GRA estimator beats all previous ones and gets incredibly close to the Cramer Rao lower bound of the \PCSA{} sketch.}
\end{abstract}

\section{Introduction}

\paragraph{The Problem.} 
A stream $\mathbf{x}=(x_1,\ldots,x_n)$ of elements from a universe $[U]$ is received one at a time.  We wish to maintain a small \emph{sketch} $S$, whose size is independent of $n$, so that
we can return an estimate $\hat{\lambda}$ to the \emph{cardinality} $\lambda = \abs{\set{x_1,\ldots,x_n}}$.
Because $\mathbf{x}$ may be partitioned among many machines
and processed separately, it is desirable that the resulting sketches be \emph{mergeable}.
For this reason we only consider sketches whose state $S$ 
depends only on the \emph{set} $\{x_1,\ldots,x_n\}$,
i.e., it is insensitive to duplicates and is not a function of the \emph{order} in which elements are processed.
See~\cite{PettieW21} for a longer discussion of mergeability and~\cite{Cohen15,Ting14,PettieWY21} 
for \emph{non}-mergeable cardinality sketching.

\paragraph{The Model.}
The Cardinality Estimation/Distinct Elements problem is studied under two models,
each with its own conventions.  In the \RandomOracle{} model it is assumed
that we have access to a uniformly random hash function $h:[U]\rightarrow [0,1]$.
By mapping $\mathbf{x}$ to $h(\mathbf{x}) = (h(x_1),\ldots,h(x_n))$, 
the state of the sketch $S$ can be updated according to a deterministic 
transition function.  In particular, the distribution of the state of $S$
depends only on the cardinality $\lambda$, not $\mathbf{x}$.
By convention, estimators for sketches in the \RandomOracle{} model are
unbiased (or close to unbiased), and their efficiency is measured by the \emph{relative} variance $\lambda^{-2} \var(\hat{\lambda})$, or equivalently, the standard error $\lambda^{-1}\sqrt{\var(\hat{\lambda})}$.\footnote{We use $\pr,\E,$ and $\var$ for probability mass, expectation, and variance.}
The leading constants in the space usage and variance are typically stated explicitly.  
See~\cite{FlajoletM85,Flajolet90,DurandF03,FlajoletFGM07,Giroire09,ChassaingG06,EstanVF06,BeyerGHRS09,Lang17,Lumbroso10,PettieW21,LukasiewiczU22,Ohayon21}.

In the \Standard{} model we can generate independent random bits, 
but must explicitly store any hash functions.  By convention, the estimators
in this model come with an $(\epsilon,\delta)$-guarantee (rather than bias and variance guarantees), i.e., $\pr(\hat{\lambda} \not\in [(1-\epsilon)\lambda,(1+\epsilon)\lambda])\leq \delta$. 
The space depends on $\epsilon, \delta, U$, and is expressed
in big-Oh notation, often with large hidden constants.  
In this model $\Theta(\epsilon^{-2}\log\delta^{-1} + \log U)$ bits of 
space is necessary and sufficient.
See Jayram and Woodruff~\cite{JayramW13} and Alon, Matias, and Szegedy~\cite{AlonMS99}
for the lower bound and \Blasiok~\cite{Blasiok20} for the upper bound.
See also~\cite{KaneNW10,GibbonsT01,Bar-YossefJKST02,Bar-YossefKS02,IndykW03}
for other results in the \Standard{} model.

In this paper we assume the \RandomOracle{} model.  The sketches used in practice (\HyperLogLog, \PCSA, $k$-\Min, etc.) all originate in the \RandomOracle{} model and despite being
implemented with imperfect hash functions, 
their empirical behavior closely matches their theoretical analysis~\cite{DataSketches,HeuleNH13,Lang17}.

\paragraph{Sketches and Estimators.}
In 1983 Flajolet and Martin~\cite{FlajoletM85} developed the first non-trivial sketch
called Probabilistic Counting with Stochastic Averaging (\PCSA).\footnote{Although this paper is seminal --- it kicked off the field of statistical analysis of data streams --- it seems to be widely unread.
It was cited in a paper by Estan, Varghase, and Fisk~\cite{EstanVF06}, who \emph{reinvented} $\PCSA$ under the name \textsf{Multiresolution Bitmap}.  
It is often cited (falsely) as the paper introducing 
the (\textsf{Hyper})\textsf{LogLog} sketch and the $k$-\Min{} sketch.}
A $\PCSA$ sketch $S_{\operatorname{PCSA}}$ consists of an array of $m$ bit vectors or \emph{subsketches}.  
The random oracle 
produces a pair $(h,g)(x)$, where $h(x) \in [m]$ is a uniformly random 
subsketch index and $g(x)\in \mathbb{Z}^+$ is equal to $k$ with probability $2^{-k}$.
The bit $S_{\operatorname{PCSA}}(j,k)$ is 1 if there 
exists an $x_i$ in the stream with
$h(x_i)=j$ and $g(x_i)=k$, and 0 otherwise.
Define $z(j) = \min\{k : S_{\operatorname{PCSA}}(j,k)=0)$ to be the position of the least significant zero 
in the $j$th subsketch.  Each $z(j)$ is individually a decent estimate of $\log(\lambda/m)$.  
Flajolet and Martin~\cite{FlajoletM85}
analyzed the ``first zero''
estimator for $\PCSA$, namely
\[
\hat{\lambda}_{\operatorname{FM}}(S_{\operatorname{PCSA}}) \propto m\cdot 2^{\frac{1}{m}\sum_{j=1}^m z(j)}
\]
and proved it has relative variance about $0.6/m$ and hence standard error about $0.78/\sqrt{m}$.
It suffices to keep $\log U$ bits per subsketch, so \PCSA{} requires $m\log U$ bits.
Although the ``first zero'' has better concentration than the ``last one,'' the latter is much cheaper to store.  
In 2003 
Durand and Flajolet~\cite{DurandF03} implemented this idea in the \LogLog{} sketch $S_{\operatorname{LL}}$, 
which requires only 
$m\log\log U$ bits.
\[
S_{\operatorname{LL}}(j) = \max\{k \,:\, S_{\operatorname{PCSA}}(j,k)=1\}.
\]
Durand and Flajolet 
proved that the estimator 
\[
\hat{\lambda}_{\operatorname{DF}}(S_{\operatorname{LL}}) \propto m\cdot 2^{\frac{1}{m} \sum_{j=1}^m S_{\operatorname{LL}}(j)}
\]
has relative variance about $C_{\operatorname{DF}}/m$ and standard error about $\sqrt{C_{\operatorname{DF}}/m}\approx 1.3/\sqrt{m}$,
where $C_{\operatorname{DF}} = \frac{2\pi^2 + \log^2 2}{12} < 1.69$.\footnote{All logarithms are natural unless specified otherwise.}  
This estimator can be regarded as taking the \emph{geometric mean} of individual estimates $2^{S_{\operatorname{DF}}(1)},\ldots,2^{S_{\operatorname{DF}}(m)}$.  In 2007, 
Flajolet, Fusy, Gandouet, and Meunier~\cite{FlajoletFGM07} proposed 
a better estimator for \LogLog{} based on the \emph{harmonic} mean:
\[
\hat{\lambda}_{\operatorname{FFGM}}(S_{\operatorname{LL}}) \propto m^2\cdot \left(\sum_{j=1}^m 2^{-S_{\operatorname{LL}}(j)}\right)^{-1}
\]
and called the resulting sketch \HyperLogLog.  
It has relative variance
roughly $C_{\operatorname{FFGM}}/m$ and standard error $\sqrt{C_{\operatorname{FFGM}}/m}\approx 1.04/\sqrt{m}$,
where $C_{\operatorname{FFGM}} = 3\ln 2-1\approx 1.07944$.
(The constants $C_{\operatorname{DF}}$ and $C_{\operatorname{FFGM}}$ are, in fact, limiting constants as $m\rightarrow \infty$.)

\paragraph{Optimal Cardinality Sketching.}
The sketches above consist of $m$ subsketches, where the memory scales
linearly with $m$, and the relative variance with $m^{-1}$.  The most reasonable
way to measure the \emph{overall efficiency} of a sketch is by its memory-variance product (\MVP).  
Scheuermann and Mauve~\cite{ScheuermannM07} 
experimented with \emph{compressed} versions
of \PCSA{} and (\textsf{Hyper})\LogLog,\footnote{It is straightforward to show that the entropy of both sketches is $O(m)$ bits.} 
and found Compressed-\PCSA{} to be slightly \MVP-superior to Compressed-\HyperLogLog.
Lang~\cite{Lang17} also experimented with these compressed sketches, but
used \emph{maximum likelihood} estimators (MLE) instead.\footnote{Lang~\cite{Lang17} formulated this as Minimum Description Length (MDL) estimation, which is equivalent in this context.}  
He found that using MLE, Compressed-\PCSA{} is \emph{substantially} 
better than Compressed-\HyperLogLog.
In general, the MLE $\hat{\lambda}_{\operatorname{MLE}}(S)$ 
of a sketch $S$ is the $\lambda^*$
that maximizes the probability of seeing $S$, 
conditioned on $\lambda=\lambda^*$ being the true cardinality.
The MLE is cumbersome to compute and update.  
Lang~\cite{Lang17} also found that a simple ``coupon collector''
estimator based on counting the number of 1s in a \PCSA{} 
sketch gives better estimates than Flajolet and Martin's 
original estimator $\hat{\lambda}_{\operatorname{FM}}$.
\[
\hat{\lambda}_{\operatorname{Lang}}(S_{\operatorname{PCSA}}) \propto m\cdot 2^{\frac{1}{m}\sum_{j=1}^{m}\sum_{k\geq 1} S_{\operatorname{PCSA}}(j,k)}.
\]
Lang~\cite{Lang17} argued informally that the relative variance of 
$\hat{\lambda}_{\operatorname{Lang}}$ should be about $(\log^2 2)/m$, 
which agreed with his experiments.

\medskip 

One annoying feature of all the sketches cited above is that their relative 
variance (and bias) are not fixed but \emph{multiplicatively periodic} 
with period factor 2.
(If \PCSA{} and \LogLog{} were defined in base $q$ they would be multiplicatively periodic with period $q$.)
The magnitude of these periodic functions is tiny,
but \emph{independent} of $m$.  Pettie and Wang~\cite{PettieW21} gave a 
generic ``smoothing'' mechanism to get rid of this periodic behavior.
They formally studied the optimality of sketches
under the memory-variance product (\MVP), where both ``memory'' and ``variance'' are interpreted as taking on their information-theorically 
optimum values. 
They defined the \emph{\Fish-number} 
of a sketch in 
terms of (1) its \underline{Fi}sher information, which controls
the variance of an optimal estimator (e.g., MLE is asymptotically optimal), and 
(2) its \underline{Sh}annon entropy, which controls its memory under optimal compression.  
They found closed form expressions for the entropy and Fisher information of base-$q$ variants of \PCSA{} and \LogLog, and discovered that $q$-\PCSA{} has \Fish-number $H_0/I_0\approx 1.98$ for all $q$, and $q$-\LogLog{} has a \Fish-number strictly larger  than $H_0/I_0$, but that it tends to $H_0/I_0$ in the limit, as $q\rightarrow \infty$.  Here $H_0$ and $I_0$ are precisely defined constants.\footnote{$I_0=\pi^2/6$ measures the Fisher information and $H_0 = \frac{1}{\log 2} + \sum_{k=1}^\infty\frac{1}{k}\log_2 \left(1+1/k\right)$ the Shannon entropy of a \PCSA{} sketch.}
The \Fishmonger{} sketch of~\cite{PettieW21} is a smoothed, entropy compressed version of \PCSA{} with an MLE estimator, which achieves $1/\sqrt{m}$ standard error
with $(1+o(1))mH_0/I_0$ bits of space.
Moreover, they give compelling evidence
that \Fishmonger{} is optimal,
i.e., no sketch can achieve \Fish-number (memory-variance product) better than $H_0/I_0$.\footnote{The optimum \Fish-number in the class of ``linearizable'' sketches is $H_0/I_0$, and all the popular sketches are linearizable, such as \HyperLogLog, \PCSA, $k$-\Min, etc.   
Known sketches that fail to be linearizable are for subtle technical reasons, e.g., \textsf{AdaptiveSampling}~\cite{Flajolet90,GibbonsT01}.}   
For example, to achieve 1\% standard error, \cite{PettieW21} indicates that one needs $(H_0/I_0)(0.01)^2$ bits, which is about 2.42 kilobytes.

\subsection{Dartboards and Remaining Area}\label{sect:dartboard}

Ting~\cite{Ting14} introduced a very intuitive \emph{visual} way to think about cardinality sketches he called the \emph{area cutting process}.
Pettie, Wang, and Yin~\cite{PettieW21,PettieWY21} described a constrained version of Ting's process they called the \emph{Dartboard} model.\footnote{The two are essentially identical, except that Ting's model does not have an explicit state space, and allows for non-deterministic state transitions.}  The elements of this model are as follows:
\begin{description}
    \item[Dartboard and Darts.] The \emph{dartboard} is a unit square $[0,1]^2$.  When an element (dart) $x\in [U]$ arrives, it is \emph{thrown} at a point $h(x) \in [0,1]^2$ in the dartboard determined by the random oracle $h$.

    \item[Cells and States.] The dartboard is partitioned into a countable set $\mathcal{C}$ of \emph{cells}.  Every cell may be \emph{occupied} or \emph{free}.  The \emph{state} of the sketch is defined by the set $\sigma \subseteq \mathcal{C}$ of occupied cells.  The \emph{state space} is some subset of $2^{\mathcal{C}}$.
    
    \item[Occupation Rules.] If a dart is thrown at an occupied cell, the state does not change.  If a dart is thrown at a free cell $c$, and the current state is $\sigma$, the new state is
    $f(\sigma,c) \supseteq \sigma\cup \{c\}$ in the state space.
\end{description}

Observe that the state transition function $f(\sigma,c)$ may force a cell to become \emph{occupied}
even though it contains no dart, which occurs in (\textsf{Hyper})\LogLog, for example.  See Figure~\ref{fig:dartboard}.

\begin{figure}[h!]
    \centering
    \begin{tabular}{c@{\hspace*{2cm}}c}
    \scalebox{.23}{\includegraphics{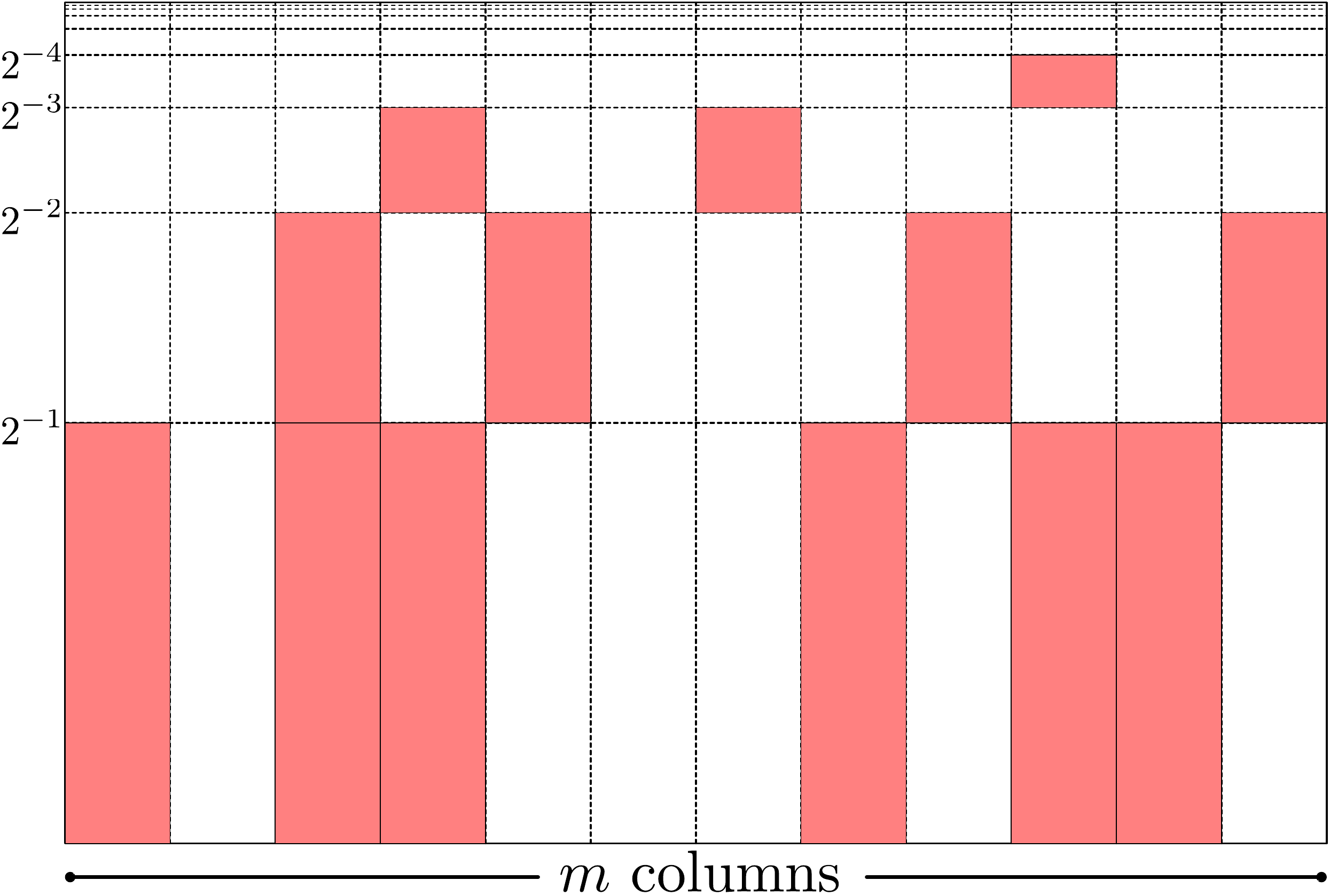}} &
    \scalebox{.23}{\includegraphics{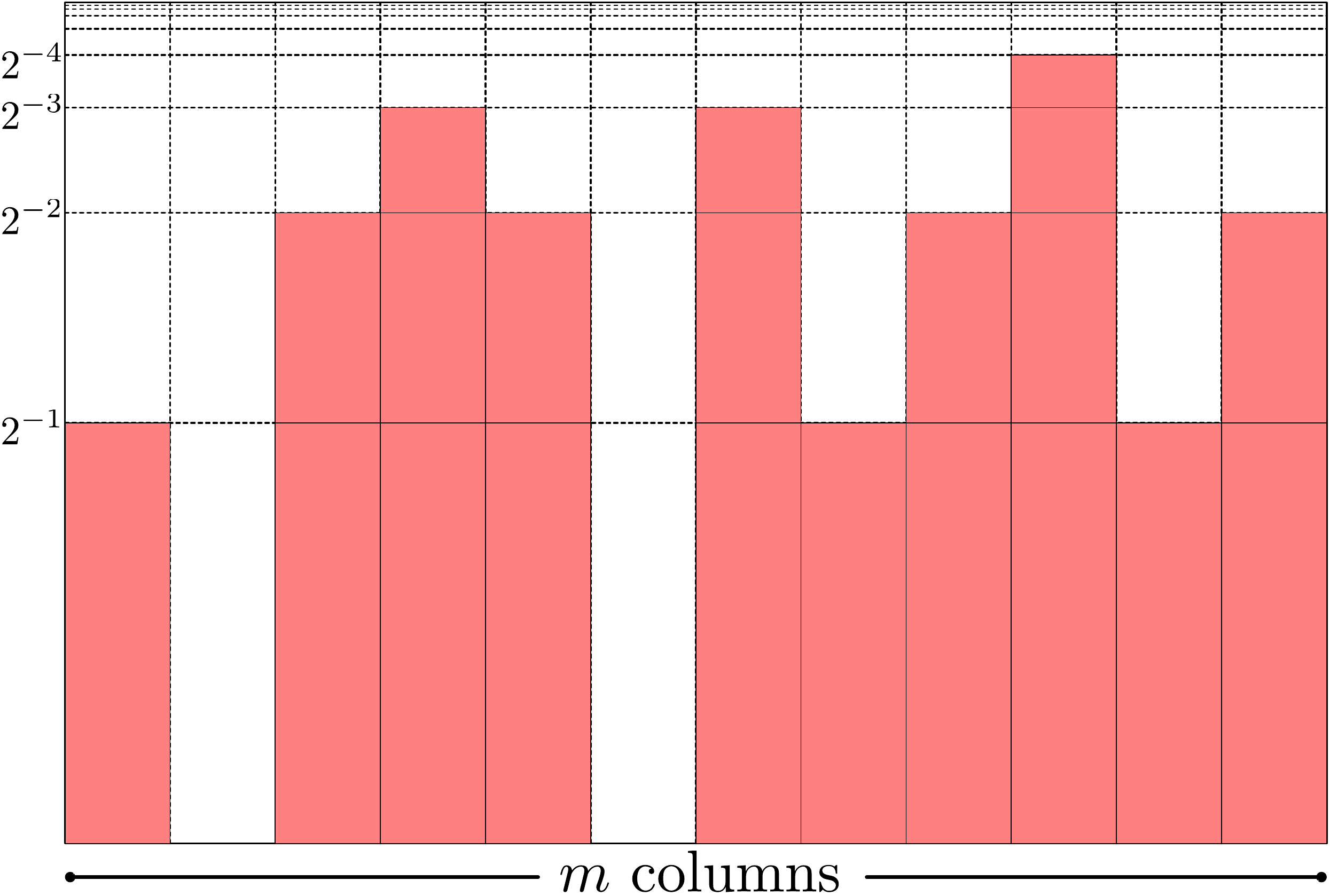}}\\
    (a) & (b)
    \end{tabular}
    \caption{\small The cell partition used by \PCSA{} and (\textsf{Hyper})\LogLog.  
    (a) A possible state of \PCSA. 
    Occupied (red) cells are precisely those containing darts.
    (b) The corresponding state of (\textsf{Hyper})\LogLog.  
    Occupied (red) cells contain a dart, or lie below a 
    cell in the same column that contains a dart.}
    \label{fig:dartboard}
\end{figure}

It was observed~\cite{Ting14,PettieW21} that the Dartboard model includes all mergeable sketches, and even some non-mergeable ones like the \textsf{S-Bitmap}~\cite{ChenCSN11}.\footnote{(In principle the dartboard
can be partitioned into cells consisting of individual hash values. Occupied cells are \emph{by definition} hash values that have no effect on the state. The requirement that the sketch be insensitive to duplicates demands that all cells hit by darts become occupied.)}
A useful summary statistic of state $\sigma$ is its \emph{remaining area}
\[
\text{RemainingArea}(\sigma) = \sum_{c\in \mathcal{C} \setminus \sigma} |c|,
\]
where $|c|$ is the size of cell $c$.  In other words, the remaining area is the total size of all free cells, or equivalently, the probability that the sketch changes upon seeing the next \emph{distinct} element.
Remaining area plays a key role in the (non-mergeable) \Martingale{} sketches of~\cite{Cohen15,Ting14,PettieWY21}.  
It also gives us a 
less fancy way to describe the \HyperLogLog{} estimator without mentioning harmonic means:
\[
\hat{\lambda}_{\operatorname{FFGM}}(S_{\operatorname{LL}}) 
\propto 
m \left(\text{RemainingArea}(S_{\operatorname{LL}})\right)^{-1}.
\]

Estimating the cardinality proportional to the reciprocal of 
the remaining area is reasonable for \emph{any} sketch.  
This is the optimal
estimator for $k$-\Min-type sketches~\cite{ChassaingG06,Lumbroso10},
and as we will see, superior to Flajolet and Martin's original $\hat{\lambda}_{\operatorname{FM}}$ estimator for \PCSA.

\medskip 

\paragraph{Generalized Remaining Area.}
Rather than have each cell $c\not\in \sigma$ contribute $|c|$ to the remaining area of state $\sigma$,
we could let it contribute $|c|^\tau$ instead for some fixed exponent $\tau>0$.
The resulting summary statistic is called \emph{$\tau$-generalized remaining area}, 
or \tGRA.
\[
\tGRA(\sigma) = \sum_{c\in \mathcal{C}\setminus \sigma} |c|^\tau.
\]

Note that $\GRA{0}$ counts the number of free cells,
which we regard as equivalent to counting the number 
of occupied cells, as is done explicitly by $\hat{\lambda}_{\operatorname{Lang}}$.\footnote{In our stylized model there are an infinite number of cells, but in practice there are a finite number, so counting free cells is equivalent to counting occupied cells.}  It is also possible to
analyze \tGRA{} when $\tau < 0$ by summing over occupied cells rather than free cells.  However, in this paper we focus only on $\tau\ge 0$.

\subsection{New Results}

A conceptual contribution of this paper is the introduction 
of the \tGRA{} summary statistic.
The main technical contribution is a relatively simple
analysis of the limiting efficiency of 
estimators for \PCSA{} and (\textsf{Hyper})\LogLog{} 
based on the $\tGRA$ statistic.
Our analysis has several benefits.
\begin{description}
\item[A Unified View.] 
We show that \HyperLogLog{}
is based on $\GRA{1}$ and that, properly interpreted,
\LogLog{} is based on $\GRA{0}$.  
Moreover, Lang's ``coupon collector'' estimator $\hat{\lambda}_{\operatorname{Lang}}$
for \PCSA{} is based on $\GRA{0}$. 
Our analysis
confirms Lang's back-of-the-envelope calculations
that $\hat{\lambda}_{\operatorname{Lang}}$ has limiting
relative variance $(\log^2 2)/m$.
\item[Simplicity.] We use two techniques to dramatically
simplify the analysis of $\tGRA$-based estimators.
The first, which has been used before~\cite{FlajoletM85,FlajoletFGM07,PettieW21,PettieWY21}, 
is to consider a ``Poissonized'' dartboard model, which allows us to 
avoid issues with small cardinalities and 
infinitesimal negative correlations between cells.
The second is a smoothing operation similar to
the one introduced in~\cite{PettieW21}.  The combined effect of 
Poissonization and smoothing is to make the sketch truly scale-invariant
at every cardinality, without any periodic behavior.
\item[Efficiency.] A \emph{statistically} optimal estimator
for \PCSA{} or \LogLog{} meets the Cram\'{e}r-Rao lower bound,
which depends on the Fisher information of the given sketch; see \cite{PettieW21}.  
It is known~\cite{CasellaB02,Vaart98} that 
the maximum likelihood estimator $\hat{\lambda}_{\operatorname{MLE}}$ meets the 
Cram\'er-Rao lower bound asymptotically, as $m\rightarrow \infty$, but MLE is not particularly simple to
update as the sketch changes.  
The limiting relative variance of \HyperLogLog's
$\hat{\lambda}_{\operatorname{FFGM}}$ is $(3\log 2-1)/m \approx 1.07944/m$, plus a tiny periodic function.
Pettie and Wang's analysis~\cite[Lemmas 4,5]{PettieW21} 
shows that the
the Cram\'er-Rao lower bound for (\textsf{Hyper})\LogLog{} is $\frac{\log 2}{\pi^2/6-1}/m \approx 1.07475/m$,
which does not leave much room for improvement! 
In contrast, there is a wider gap between the limiting 
variance of \PCSA's
$\hat{\lambda}_{\operatorname{DF}}$, 
namely $0.6/m$, or 
Lang's improvement
$\hat{\lambda}_{\operatorname{Lang}}$, 
namely $(\log^2 2)/m \approx 0.48/m$,
and the Cram\'er-Rao lower bound~\cite[Theorem 3]{PettieW21}
of $\frac{\pi^2}{6\log 2}/m \approx 0.42138/m$.
By choosing the optimal $\tau$s, our \tGRA-based estimators achieve relative variance
$ 1.0750/m$ for the \LogLog{} sketch
and $0.435532/m$ for the \PCSA{} sketch, in both cases
\emph{nearly closing the gap} between the best known explicit estimators
and the Cram\'er-Rao lower bound.  The improvement to \HyperLogLog{} is probably not worth implementing, but
the improvement to \PCSA{} can lead to an 
immediate improvement to practical implementations of \PCSA,
e.g., the \textsf{CPC} (Compressed Probabilistic Counting) sketch included in Apache \emph{DataSketches}~\cite{DataSketches}.  
\end{description}

Figures~\ref{fig:vt_hll} and \ref{fig:vt_pcsa} illustrate  
the efficiency of \tGRA-based 
estimators relative to other estimators,
and Table~\ref{tab:results} summarizes the same information symbolically.

\begin{figure}[h!]
\hspace*{2cm}\includegraphics[height=0.35\linewidth]{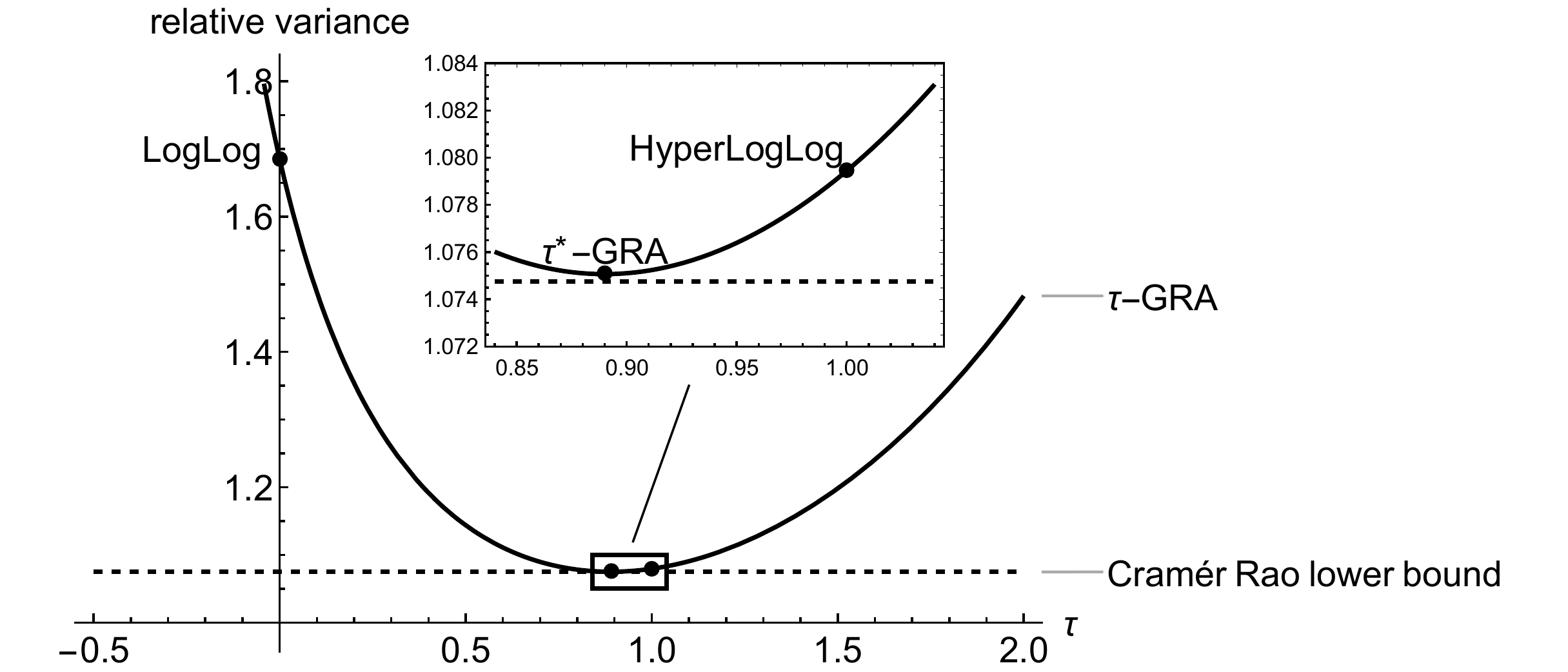}
    \caption{Relative variance of 
    estimators for the \LogLog{} sketch.
    The $\GRA{\tau}$ estimator attains minimum variance at $\tau^*= 0.88989$, which
    comes within 0.02\% of 
    the Cram\'er-Rao lower bound. As a comparison, \HyperLogLog{} is 0.4\% over the bound.} 
    \label{fig:vt_hll}
\end{figure}

\begin{figure}[h!]
\hspace*{2cm}\includegraphics[height=0.35\linewidth]{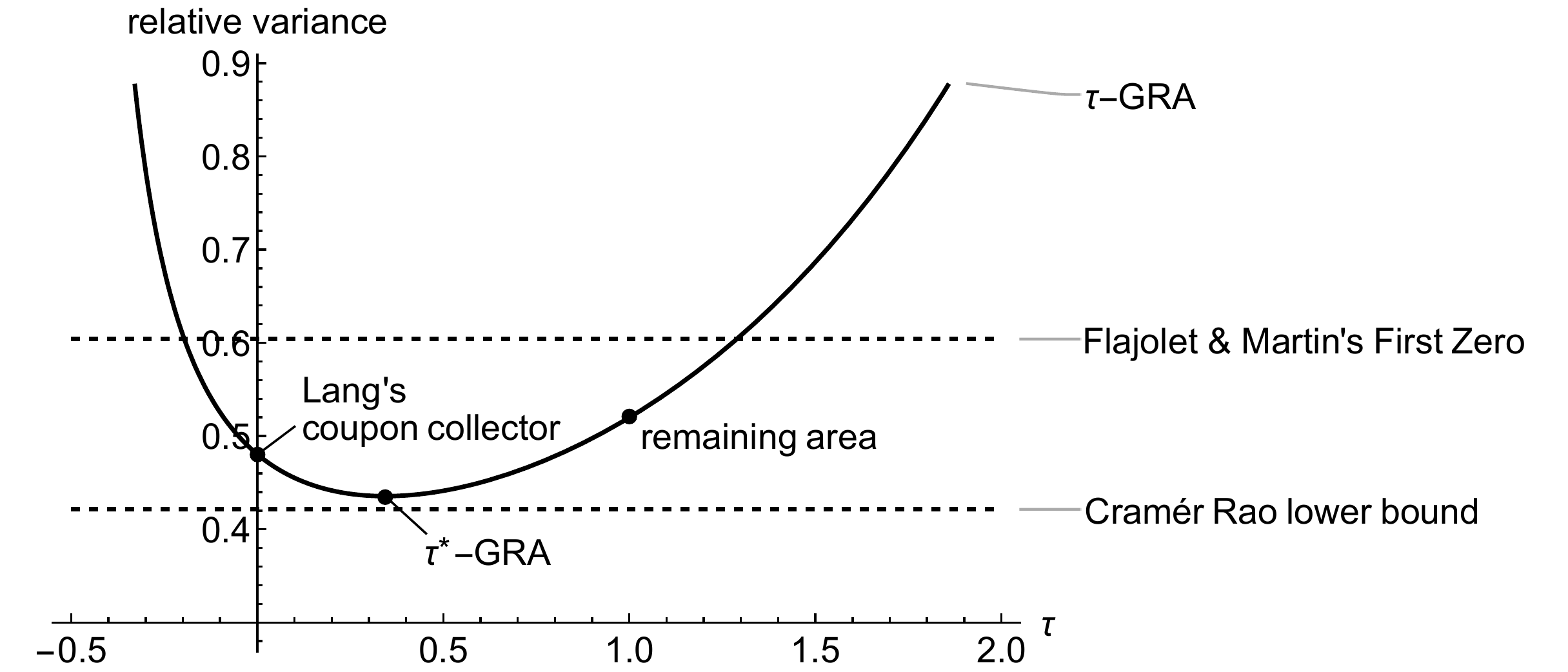}
    \caption{Relative variance of estimators for the \PCSA{} sketch. 
    The $\GRA{\tau}$ estimator attains minimum variance at $\tau^*=0.343557$, which
    comes within 3\% of 
    the Cram\'er-Rao lower bound.}
    \label{fig:vt_pcsa}
\end{figure}

\begin{table}[h!]
    \centering
    \begin{tabular}{l|l|l}
    \multicolumn{1}{l}{\sc Sketch \& Estimator} &
    \multicolumn{1}{l}{\sc Limiting Relative Variance} &
    \multicolumn{1}{l}{\sc Citation}\\\hline\hline
    \PCSA   \hfill {\small Flajolet \& Martin 1983} &                                        & \cite{FlajoletM85}\\\hline
    \ \ First Zero ($\hat{\lambda}_{\operatorname{FM}}$) & $\approx 0.6/m + \theta(\lambda)$   & \cite{FlajoletM85}\\
    \ \ Coupon Collector ($\hat{\lambda}_{\operatorname{Lang}}$) & $\approx (\log^2 2)/m + \theta(\lambda) \approx 0.48/m$   & \cite{Lang17}\\
    \ \ Smoothed $\GRA{0}$             & $(\log^2 2)/m$ & \textbf{Theorem~\ref{thm:pcsa-tGRA}}\\
    \ \ Smoothed $\GRA{1}$             & $\frac{3\ln 2}{4}/m \approx 0.51986/m$     & \textbf{Theorem~\ref{thm:pcsa-tGRA}}\\
    \ \ Smoothed \tGRA\hfill {\small $\tau=0.343557$}    & $\approx 0.435532/m$                  & \textbf{Theorem~\ref{thm:pcsa-tGRA}}\\
    \ \ MLE\ /\ Cram\'{e}r-Rao Lower Bound               & $\frac{\pi^2}{6\log 2}/m \approx 0.42138/m$ & \cite{PettieW21}\\\hline\hline
    \multicolumn{3}{l}{}\\\hline\hline
    \LogLog{} \hfill {\small Durand and Flajolet 2003} & & \cite{DurandF03}\\\hline
    \ \ Geometric Mean ($\hat{\lambda}_{\operatorname{DF}}$) & $\frac{2\pi^2 + \log^2 2}{12}/m + \theta(\lambda) \approx 1.69/m$ & \cite{DurandF03}\\
    \ \ Harmonic Mean ($\hat{\lambda}_{\operatorname{FFGM}}$) & $(3\log 2-1)/m + \theta(\lambda) \approx 1.07944/m$ & \cite{FlajoletFGM07}\\
    \ \ Smoothed $\GRA{0}$                              & $\frac{2\pi^2 + \log^2 2}{12}/m \approx 1.69/m$ & \textbf{Theorem~\ref{thm:ll}}\\
    \ \ Smoothed $\GRA{1}$                              & $(3\log 2-1)/m \approx 1.07944/m$ & \textbf{Theorem~\ref{thm:ll}}\\
    \ \ Smoothed \tGRA\hfill {\small $\tau = 0.889897$}       & $\approx 1.07507/m$   & \textbf{Theorem~\ref{thm:ll}}\\
    \ \ MLE\ /\ Cram\'{e}r-Rao Lower Bound              & $\frac{\log 2}{\pi^2/6-1}/m \approx 1.07475/m$ & \cite{PettieW21}\\\hline\hline
    \end{tabular}
    \caption{Relative variance as $m,\lambda\to \infty$. All $\theta(\lambda)$ functions are multiplicatively periodic 
    with period 2, which have a small magnitude independent of $m$. The ``smoothing'' mechanism (Section~\ref{sect:poissonization-smoothing}) eliminates periodic behavior.\label{tab:results}}
\end{table}

\subsection{Related Work}

One weakness of \HyperLogLog{} is its poor performance
on small cardinalities $\lambda = \tilde{O}(m)$.
Heule et al.~\cite{HeuleNH13} proposed improvements
to~\cite{FlajoletFGM07}'s estimator on small cardinalities,
as well as some more efficient sketch encodings when $\lambda$ is small. Ertl~\cite{Ertl17} experimented with maximum likelihood estimation (MLE) for \HyperLogLog{} sketches, which 
behaves well at all cardinalities.

{\L}ukasiewicz and Uzna{\'n}ski~\cite{LukasiewiczU22}
developed a \HyperLogLog-like sketch
that, in our terminology, samples $g(x)$
from a \emph{Gumbel} distribution rather than a \emph{geometric} distribution.
As the maximum of several Gumbel-distributed variables is Gumbel-distributed, this resulted
in a simpler analysis relative to~\cite{FlajoletFGM07}.

It is well known that the entropy of \HyperLogLog{} is $O(m)$.
Durand~\cite{Durand04} gave a prefix-free code for (\textsf{Hyper})\LogLog{} with expected length $3.01m$,
and Pettie and Wang~\cite{PettieW21} gave a precise expression for the entropy of (\textsf{Hyper})\LogLog, which is about $2.83m$.  Xiao et al.~\cite{XiaoCZL20} proposed lossy 
compressions of \HyperLogLog{} to $4m$ and even $3m$ bits,
but their variance calculation is incorrect; 
see~\cite{PettieW21} for a discussion of the 
problems of lossy compression in this context.
Very recently Karppa and Pagh~\cite{KarppaP22} 
presented a lossless compression of \HyperLogLog{} to
$(1+o(1))m\log\log\log U$ bits (`\textsf{HyperLogLogLog}')
while still allowing fast
update times.

Pettie, Wang, and Yin~\cite{PettieW21}
proposed a class of \emph{\textsf{Curtain}} sketches that
combine elements of \LogLog{} and \PCSA{} while being easily compressible, but they only analyzed them in the \emph{non}-mergeable setting of~\cite{Cohen15,Ting14}.  Ohayon~\cite{Ohayon21}
analyzed the most practical (and mergeable) \textsf{Curtain}$(2,\infty,1)$ sketch, and found it to be substantially more efficient than \HyperLogLog{} in terms of memory-variance product.
In particular, its limiting variance is $C/m$, $C=\frac{41\log 2}{16}-1 \approx 0.776$ while using only $m$ more bits
than \HyperLogLog{} or any lossless compression thereof, e.g.~\cite{DataSketches} or \cite{KarppaP22}.

\subsection{Organization}

In Section~\ref{sect:poissonization-smoothing} we define \emph{scale-invariance},
and introduce \emph{Poissonization} and \emph{smoothing} mechanisms to achieve scale-invariance.
Section~\ref{sect:gra} introduces cardinality estimates based on the \tGRA{} statistic,
and Sections~\ref{sect:GRA-loglog} and \ref{sect:GRA-pcsa} analyze the behavior of these estimators on the \LogLog{} and \PCSA{} sketches, respectively.
We conclude with some remarks in Section~\ref{sect:conclusion}.

\section{Poissonization and Smoothing}\label{sect:poissonization-smoothing}

Suppose we have an estimator $E_\lambda$ at cardinality $\lambda$. Ideally, for a statistic to 
be a \emph{measurement} of cardinality
the relative error should distribute identically for any cardinality, i.e., $E_\lambda$ should be \emph{scale-invariant}.

\begin{definition}[scale-invariance]\label{def:scale-invariance}
Let $E_\lambda$ be an estimator of $\lambda$. We say $E_\lambda$ is \emph{scale-invariant} if for any $\lambda>0$,
\begin{align*}
    \frac{E_\lambda}{\lambda} \sim E_1.
\end{align*}
\end{definition}
Note that scale-invariance implies unbiasedness and 
constant relative variance that is independent of $\lambda$.
Since $\E E_\lambda = \lambda \E E_1$, if $\E E_1\neq 1$ then we can 
replace $E_\lambda$ with $\frac{E_\lambda}{\E E_1}$ to make it an unbiased estimate of $\lambda$.  Moreover, $\var(E_\lambda)=\lambda^2 \var(E_1)$, where $\var(E_1)$ is some fixed constant independent of $\lambda$.

Much of the simplicity and elegance of our analysis relies on beginning
from \emph{this} definition of strict scale-invariance.
Unfortunately, in the real world the \PCSA{} and \HyperLogLog{} sketches 
are only \emph{approximately} scale-invariant, 
stemming from two causes discussed in the introduction.  
\begin{itemize}
    \item Fixed-size sketches have two ``edge effects,'' when $\lambda = \tilde{O}(m)$ is small and when $\lambda=\tilde{\Omega}(U)$ is approaching the size of the universe. The latter problem cropped up when $U=2^{32}$ was small~\cite{FlajoletFGM07} but is generally not an issue when $U\ge 2^{64}$.  See~\cite{FlajoletFGM07,HeuleNH13,Ertl17}
    for improved estimation methods for small cardinalities.

    \item Sketches that store continuous random variables (like $k$-\Min~\cite{Cohen97,Giroire09,ChassaingG06,Lumbroso10}) exhibit no periodic behavior but sketches that discretize their data are multiplicatively periodic in the base of the discretization, which is 
    2 in the case of standard \PCSA{} and (\textsf{Hyper})\LogLog.  
    Therefore, the relative variance of sketches like \PCSA{} and \HyperLogLog{} 
    are \emph{not} actually fixed constants independent of $\lambda$ but periodic functions of $\lambda$, whose magnitude is small but \emph{independent} of $m$, 
    the size of the sketch.
\end{itemize}

We consider a \emph{smoothed}, \emph{Poissonized}, and \emph{infinite} dartboard model to make the task of variance analysis simpler.

\begin{definition}[Smoothed, Poissonized, Infinite model]\label{def:smoothedPoisson}
The dartboard model and cell partition of \PCSA{} and (\textsf{Hyper})\LogLog{} are changed as follows.
\begin{description}
    \item[Smoothing.]  The sketch consists of $m$ subsketches; these correspond to the columns in Figure~\ref{fig:dartboard}.  
    Pick a vector $\mathbf{R} = (R_1,\ldots,R_m)$ of \emph{offsets}.
    Cell $j$ in column $i$ now covers the vertical interval
    $(2^{-j-R_i}, 2^{-(j-1)-R_i}]$.
    We will pick $\mathbf{R}$ in two ways as is convenient.
    In Section~\ref{sect:GRA-loglog} we choose
    each $R_i \in [0,1)$ uniformly at random, independent
    of other offsets.  
    In Section~\ref{sect:GRA-pcsa} we choose
    $\mathbf{R} = (0,1/m,2/m,\ldots,(m-1)/m)$
    to be the uniformly spaced offset vector.

    \item[Infinite Dartboard.]  Rather than index cells by $\mathbb{Z}^+$,
    index them by $\mathbb{Z}$, i.e., the dartboard has unit width and infinite height.  
    For example, cell $0$ covers the vertical interval 
    $(2^{-R_i}, 2^{1-R_i}]$, 
    cell $-5$ covers the vertical interval $(2^{5-R_i},2^{6-R_i}]$, etc. 
    
    \item[Poissonization.] In the usual dartboard,  the probability that a cell $c$ remains free at cardinality $\lambda$ is $(1-|c|)^\lambda  \to e^{-|c|\lambda} $ and the correlation between cells vanishes as $\lambda \to \infty$. For simplicity, these asymptotic properties can be achieved even for small $\lambda$ with \emph{Poissonization}. Informally speaking, with Poissonization, for each insertion, instead of throwing \emph{one} dart at the board, darts \emph{appear} on the board memorylessly with density 1. Formally speaking, for every new insertion, a Poisson point process on the infinite board with density $1$ is added to the board, where each point in the process corresponds to a dart. Thus, after $\lambda$ insertions, the darts on the board form a Poisson point process with density $\lambda$. 
    By construction, for any $\lambda$---even $\lambda = 1$---the cells are independent and a cell $c$ will remain free with probability precisely $e^{-|c|\lambda}$. 

\end{description}
\end{definition}

Smoothing eliminates periodic behavior, and the combination of Poissonization and the Infinite Dartboard makes the distribution of the sketch scale invariant for all $\lambda$.  Our justification for these changes is that they make the analysis simpler, and it does not really matter whether they are implemented in practice once $\lambda$ is not too small.
For example, w.h.p., there is no way to detect whether we are in a unit or infinite dartboard once $\lambda=\Omega(m\log m)$
as all cells indexed by $\mathbb{Z}-\mathbb{Z}^+$ will be occupied.  Moreover, as $\lambda\rightarrow \infty$
the distribution of the true dartboard converges toward the Poissonized dartboard.  Smoothing eliminates the tiny periodic behavior of the estimator, 
but these effects are too small to worry about unless the 
magnitude of this periodic function is close
to the desired variance, in which case smoothing \emph{should} 
be implemented in practice. 

\begin{remark}
Pettie and Wang~\cite{PettieW21} introduced smoothing in 2021 to make the analysis of \Fish-numbers well defined and non-periodic.
{\L}ukasiewicz and Uzna{\'n}ski~\cite{LukasiewiczU22} used smoothing
to reduce the space of their cardinality sketch 
from $O(m(\log\log U + \log\epsilon^{-1}))$ bits
to $O(m\log\log U)$ bits, matching \HyperLogLog{} 
asymptotically.
\end{remark}

From this point on, the smoothed, Poissonized, and infinite dartboard model is assumed.

\section{Estimation by Generalized Remaining Area}\label{sect:gra}
Cardinality estimation can be viewed as a \emph{point estimation} problem where the number of subsketches is the number of independent samples/observations. Classically, one can produce i.i.d.~estimates $\left(E^{(i)}_\lambda\right)_{i\in [1,m]}$ of $\lambda$ with each subsketch and then use the sample mean as the combined estimator. 
A more general framework is to produce estimates $\left(E_{\lambda;f}^{(i)}\right)_{i\in[1,m]}$ 
of $f(\lambda)$ for some monotonic function $f$,
then take the sample mean $\frac{1}{m}\sum_{i=1}^m E_{\lambda;f}^{(i)}$,
which is concentrated around $f(\lambda)$. 
Thus we can recover an estimator of $\lambda$ by applying $f^{-1}$ 
to the sample mean.  This process is summarized as follows.

\medskip

\framebox{\parbox{1.5in}{\centering
$E_{\lambda;f}^{(1)},E_{\lambda;f}^{(2)},\ldots,E_{\lambda;f}^{(m)}$\\
\vspace{.1cm}($m$ independent\\
estimators of $f(\lambda)$)
}}
$\xrightarrow{\text{sample mean}}$
\framebox{\parbox{1.5in}{\centering
$\frac{1}{m}\sum_{i=1}^m E_{\lambda;f}^{(i)}$\\
\vspace{.1cm}(a concentrated\\ estimator of $f(\lambda)$)
}}
$\xrightarrow{\text{ $f^{-1}$}}$
\framebox{\parbox{1.5in}{\centering
$f^{-1}\left(\frac{1}{m}\sum_{i=1}^m E_{\lambda;f}^{(i)}\right)$\\
\vspace{.1cm}(a concentrated\\ estimator of $\lambda$)
}}

\medskip

\ignore{
\begin{figure}[h!]
    \centering
     \scalebox{.9}{
    \begin{tabular}{ccccc}
      $E_{\lambda;f}^{(1)},E_{\lambda;f}^{(2)},\ldots,E_{\lambda;f}^{(m)}$   & $\xrightarrow{\text{sample mean}}$ & $\frac{1}{m}\sum_{i=1}^m E_{\lambda;f}^{(i)}$ &
         $\xrightarrow{\text{ $f^{-1}$}}$ & $f^{-1}\left(\frac{1}{m}\sum_{i=1}^m E_{\lambda;f}^{(i)}\right)$ \\ 
         \small
         \parbox{0.25\linewidth}{\centering ($m$ independent estimators of $f(\lambda)$)}  & & \parbox{0.25\linewidth}{\centering (a concentrated estimator of $f(\lambda)$)}  & & \parbox{0.25\linewidth}{\centering (a concentrated estimator of $\lambda$)} 
    \end{tabular}}
    \caption{change of parameter: a general framework of constructing estimators}
    \label{fig:change}
\end{figure}
}

An important example is its application to the \emph{remaining area}.   The remaining area (of one subsketch) offers a natural estimate for $\lambda^{-1}$. One can get a concentrated estimation for $\lambda^{-1}$ using the sample mean of remaining areas of the subsketches and then take the inverse to get a concentrated estimation for $\lambda$. 
This is exactly what the \HyperLogLog{} estimator $\hat{\lambda}_{\operatorname{FFGM}}$ does.

The remaining area estimates $\lambda^{-1}$ and in general, the $\tau$-generalized remaining area estimates $\lambda^{-\tau}$. Let $A_{\lambda;\tau}$ be the $\tau$-generalized remaining area of one subsketch and $A_{\lambda;\tau}^{(1)}, A_{\lambda;\tau}^{(2)},\ldots,A_{\lambda;\tau}^{(m)}$ be $m$ i.i.d.~copies. 
Thus by the same process, we get a generic estimator $\hat{\lambda}_{\tau;m}$ based on \tGRA{}. 
\begin{align*}
    \hat{\lambda}_{\tau;m}  \propto \left(\frac{1}{m}\sum_{i=1}^mA_{\lambda;\tau}^{(i)}\right)^{-\tau^{-1}}.
\end{align*}

For any sketch, it turns out that the induced estimator $\hat{\lambda}_{\tau;m}$ is scale-invariant if the \tGRA{} 
statistic itself is \emph{$\tau$-scale-invariant}.

\begin{definition}[$\tau$-scale-invariance]\label{def:tau-scale-invariance}
Let $A_{\lambda; \tau}$ be the $\tau$-generalized remaining area of a sketch. We say $A_{\lambda;\tau}$ is $\tau$-scale-invariant if $A_{\lambda;\tau} \sim \lambda^{-\tau} A_{1;\tau}$ for any $\lambda>0$.
\end{definition}

\begin{theorem}\label{thm:tscale}
If $A_{\lambda;\tau}$  is $\tau$-scale-invariant, then 
\begin{align*}
     \hat{\lambda}_{\tau;m}^*=\left(\frac{1}{m}\sum_{i=1}^m A^{(i)}_{\lambda;\tau}\right)^{-\tau^{-1}}.
\end{align*}
is a scale-invariant estimator for $\lambda$. 
\end{theorem}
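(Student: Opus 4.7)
The plan is to verify the definition of scale-invariance (Definition~\ref{def:scale-invariance}) directly by chasing distributional identities. Concretely, I need to show that $\hat{\lambda}_{\tau;m}^*/\lambda$ has the same distribution as $\hat{\lambda}_{\tau;m}^*|_{\lambda=1}$, for every $\lambda>0$.

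The first step is to upgrade the hypothesis $A_{\lambda;\tau}\sim \lambda^{-\tau}A_{1;\tau}$ from a marginal identity to a joint one on the $m$-dimensional vector of subsketch statistics. Because the $A^{(i)}_{\lambda;\tau}$ are independent across subsketches (we are in the Poissonized infinite model of Definition~\ref{def:smoothedPoisson}, so different columns are independent) and each $A^{(i)}_{\lambda;\tau}$ is $\tau$-scale-invariant, the product measure gives
\[
\bigl(A^{(1)}_{\lambda;\tau},\ldots,A^{(m)}_{\lambda;\tau}\bigr)\;\sim\;\lambda^{-\tau}\bigl(A^{(1)}_{1;\tau},\ldots,A^{(m)}_{1;\tau}\bigr).
\]

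Next I push this joint identity through a deterministic continuous map. The map $g:(a_1,\ldots,a_m)\mapsto \bigl(\tfrac{1}{m}\sum_{i=1}^m a_i\bigr)^{-1/\tau}$ is measurable, and distributional equality is preserved under any measurable transformation. Moreover $g$ is positively homogeneous of degree $-1$: $g(c\,a_1,\ldots,c\,a_m)=c^{-1}g(a_1,\ldots,a_m)$ for any $c>0$. Applying $g$ with $c=\lambda^{-\tau}$ and raising $c$ to the $-1/\tau$ power yields a factor of $\lambda$, so
\[
\hat{\lambda}_{\tau;m}^*\;=\;g\bigl(A^{(1)}_{\lambda;\tau},\ldots,A^{(m)}_{\lambda;\tau}\bigr)\;\sim\;\lambda\cdot g\bigl(A^{(1)}_{1;\tau},\ldots,A^{(m)}_{1;\tau}\bigr)\;=\;\lambda\cdot\hat{\lambda}_{\tau;m}^*\bigl|_{\lambda=1}.
\]
Dividing by $\lambda$ gives the required identity $\hat{\lambda}_{\tau;m}^*/\lambda\sim \hat{\lambda}_{\tau;m}^*|_{\lambda=1}$, which is exactly the scale-invariance condition.

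There is no real obstacle here; the content of the theorem is the bookkeeping that (i) independence across subsketches lifts the marginal $\tau$-scale-invariance to a joint one, and (ii) the estimator is built from a homogeneous-degree-$(-1)$ functional of the subsketch vector, so the $\lambda^{-\tau}$ scaling converts cleanly into the $\lambda^{1}$ scaling demanded by Definition~\ref{def:scale-invariance}. The only point worth flagging is that one should justify $\tau>0$ implicitly (otherwise the map $x\mapsto x^{-1/\tau}$ is problematic), which is consistent with the paper's convention of $\tau\ge 0$ and the fact that the estimator formula only makes sense for $\tau\neq 0$; the $\tau=0$ case is handled separately (as noted in the discussion of $\GRA{0}$ counting free/occupied cells).
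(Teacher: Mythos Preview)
Your argument is essentially the same as the paper's: use $\tau$-scale-invariance to replace each $A^{(i)}_{\lambda;\tau}$ by $\lambda^{-\tau}A^{(i)}_{1;\tau}$, then pull the scalar through the $(-1/\tau)$th power to produce the factor $\lambda$. Your write-up is in fact slightly more careful than the paper in that you explicitly invoke independence of the subsketches to lift the marginal identity to a joint one before applying the deterministic map.

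One small slip: the map $g(a_1,\ldots,a_m)=\bigl(\tfrac{1}{m}\sum a_i\bigr)^{-1/\tau}$ is positively homogeneous of degree $-1/\tau$, not degree $-1$; that is, $g(c\,a)=c^{-1/\tau}g(a)$. Your very next sentence (``raising $c$ to the $-1/\tau$ power yields a factor of $\lambda$'') shows you computed with the correct exponent, so this is just a typo in the stated degree and does not affect the validity of the proof.
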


\begin{proof}
By default, $\hat{\lambda}_{\tau;m}^*$ is the estimator at cardinality $\lambda$. When needed, we use $\hat{\lambda}^*_{\tau;m}[\lambda']$ to indicate
that it is being evaluated on a sketch with cardinality $\lambda'$.
By the $\tau$-scale-invariance of $A_{\lambda;\tau}$, we have $A_{\lambda;\tau} \sim \lambda^{-\tau} A_{1;\tau}$. Thus 
\begin{align*}
     \hat{\lambda}_{\tau;m}^*[\lambda] \sim \left(\frac{1}{m}\sum_{i=1}^m \lambda^{-\tau} A^{(i)}_{1;\tau}\right)^{-\tau^{-1}}=\left(\frac{1}{m}\lambda^{-\tau}\sum_{i=1}^m  A^{(i)}_{1;\tau}\right)^{-\tau^{-1}}=\lambda \cdot \hat{\lambda}^*_{\tau;m}[1].
\end{align*}
\end{proof}

Since we care about the asymptotic region, we prove the following useful theorem that expresses the asymptotic mean and variance of $\hat{\lambda}_{\tau;m}^*$ by the mean and variance of $A_{1;\tau}$ as $m\rightarrow \infty$. Note that although $\hat{\lambda}_{\tau;m}^*$ is scale-invariant, it is not yet normalized to be unbiased; the estimator $\hat{\lambda}_{\tau;m}$ will be the unbiased version of $\hat{\lambda}_{\tau;m}^*$.
The asymptotic relative variance after normalization is also given in the lemma.

\begin{theorem}\label{thm:asymp}
 If $A_{1;\tau}$ is $\tau$-scale-invariant with finite variance,
we have for any $\lambda >0$,
\begin{enumerate}
\item $\displaystyle \lim_{m\to \infty} \E \hat{\lambda}_{\tau;m}^* = \lambda (\E A_{1;\tau})^{-\tau^{-1}}.$

\item $\displaystyle
     \lim_{m\to \infty}m\lambda^{-2}\var(\hat{\lambda}_{\tau;m}^*) = \tau^{-2}(\E A_{1;\tau})^{-2\tau^{-1}-2} \var(A_{1;\tau})$.

\item For any $\lambda>0$, the normalized estimator $\hat{\lambda}_{\tau;m} = (\E A_{1;\tau})^{\tau^{-1}}\hat{\lambda}_{\tau;m}^*$ is asymptotically unbiased and has limit relative variance
\begin{align*}
    \lim_{m\to \infty}m\lambda^{-2}\var(\hat{\lambda}_{\tau;m}) = \tau^{-2}\,(\E A_{1;\tau})^{-2}\, \var(A_{1;\tau}).
\end{align*}
\end{enumerate}
\end{theorem}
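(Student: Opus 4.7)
The plan is to apply the Delta Method to the i.i.d.\ sum $S_m := \frac{1}{m}\sum_{i=1}^m A_{1;\tau}^{(i)}$ composed with the smooth function $g(x) := x^{-\tau^{-1}}$.

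First I would reduce to the case $\lambda = 1$. By Theorem~\ref{thm:tscale}, $\hat{\lambda}_{\tau;m}^*[\lambda] \sim \lambda \cdot \hat{\lambda}_{\tau;m}^*[1]$, so $\E \hat{\lambda}_{\tau;m}^* = \lambda\, \E \hat{\lambda}_{\tau;m}^*[1]$ and $\var(\hat{\lambda}_{\tau;m}^*) = \lambda^2 \var(\hat{\lambda}_{\tau;m}^*[1])$; the factor $\lambda^{-2}$ appearing in part~2 then cancels exactly, and it suffices to verify parts~1 and~2 at $\lambda = 1$. Write $\mu := \E A_{1;\tau}$ and $\sigma^2 := \var(A_{1;\tau})$, so that $\hat{\lambda}_{\tau;m}^*[1] = g(S_m)$.

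Next, the classical LLN gives $S_m \to \mu$ a.s., and the CLT gives $\sqrt{m}(S_m - \mu) \Rightarrow N(0,\sigma^2)$. Because $g$ is $C^\infty$ in a neighborhood of $\mu > 0$ with $g'(\mu) = -\tau^{-1}\mu^{-\tau^{-1}-1}$, the Delta Method yields
\begin{align*}
\sqrt{m}\bigl(g(S_m) - g(\mu)\bigr) \;\Rightarrow\; N\!\bigl(0,\; g'(\mu)^2\sigma^2\bigr) \;=\; N\!\bigl(0,\; \tau^{-2}\mu^{-2\tau^{-1}-2}\sigma^2\bigr),
\end{align*}
which is precisely the limit claimed in part~2, while the continuous mapping theorem yields $g(S_m) \to g(\mu) = \mu^{-\tau^{-1}}$ for part~1. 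Part~3 is then algebraic: $\hat{\lambda}_{\tau;m} = \mu^{\tau^{-1}}\hat{\lambda}_{\tau;m}^*$ implies $\var(\hat{\lambda}_{\tau;m}) = \mu^{2\tau^{-1}}\var(\hat{\lambda}_{\tau;m}^*)$, and the extra $\mu^{2\tau^{-1}}$ cancels the $\mu^{-2\tau^{-1}}$ in part~2, leaving $\tau^{-2}\mu^{-2}\sigma^2$.

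The one step that genuinely needs care is upgrading these \emph{distributional} limits to convergence of the first two \emph{moments}. I would handle this via the second-order Taylor expansion
\begin{align*}
g(S_m) \;=\; g(\mu) + g'(\mu)(S_m - \mu) + \tfrac{1}{2}g''(\xi_m)(S_m - \mu)^2,
\end{align*}
take expectations so that the linear term vanishes, and bound the quadratic remainder by splitting on the good event $\{|S_m - \mu| \leq \mu/2\}$ (where $|g''|$ is uniformly bounded and Chebyshev controls $\E(S_m-\mu)^2 = \sigma^2/m$) versus its complement (whose probability decays in $m$ by Chebyshev, and on which the moments of $g(S_m)$ are controlled because, in the smoothed Poissonized infinite dartboard of Section~\ref{sect:poissonization-smoothing}, $A_{1;\tau}$ is strictly positive with all moments finite, thanks to the far-below cells that remain free with overwhelming probability). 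Applying the same bookkeeping to $g(S_m)^2$ yields convergence of the variance. This uniform-integrability step is the main technical obstacle; the rest is a routine Delta Method calculation.
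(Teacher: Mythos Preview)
Your proposal is correct and follows essentially the same route as the paper: reduce to $\lambda=1$ via scale-invariance, then apply a Taylor expansion (Delta Method) of $g(x)=x^{-\tau^{-1}}$ around $\mu=\E A_{1;\tau}$ to the sample mean $S_m$, and read off Part~3 algebraically. The paper's argument is slightly more informal about the moment-versus-distribution issue you flag---it simply writes the first-order expansion with $O((Y_m-\mu)^2)$ remainder and takes expectations---so your explicit good/bad-event split is, if anything, a tightening of the same proof rather than a different one.
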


\begin{proof}
By scale-invariance, it suffices to consider 
the case $\lambda = 1$.
Let $X = A_{1;\tau}$ and $Y_m =  \frac{1}{m}\sum_{i=1}^m A^{(i)}_{1;\tau}$ be the mean of $m$ copies of $X$.
Define $f(x)=x^{-\tau^{-1}}$. 
Then $\hat{\lambda}_{\tau;m}^* = f(Y_m)$. 
Since we consider the case as $m\rightarrow \infty$, 
by the central limit theorem, $Y_m$ is asymptotically normal around $\E X$. With high probability we have $Y_m\in (\E X-\frac{\log m}{\sqrt{m}},\E X + \frac{\log m}{\sqrt{m}})$. Consider the first order approximation in this small neighborhood. 
\begin{align}
    f(x) &= f(\E X) + (x-\E X)f'(\E X) + O((x-\E X)^2) \nonumber\\
    &=(\E X)^{-\tau^{-1}} - (x-\E X)\tau^{-1} (\E X)^{-\tau^{-1}-1} + O((x-\E X)^2).\label{eqn:firstorder}
\intertext{Note that $\E Y_m= \E X$ and $\var(Y_m) = \frac{1}{m}\var(X)=O(\frac{1}{m})$. Then we have}
\E f(Y_m)& = (\E X)^{-\tau^{-1}} - (\E Y_m - \E X)\tau^{-1}(\E X)^{-\tau^{-1}-1} + O(\var(Y_m))\nonumber\\
    &=(\E X)^{-\tau^{-1}} + O(\fr{1}{m}).\label{eqn:firstmoment}
\intertext{Turning now to the variance,}
    \var(f(Y_m)) &= \E\left(f(Y_m)-\E f(Y_m)\right)^2\nonumber\\
    &= \E\left((Y_m-\E X)\tau^{-1}(\E X)^{-\tau^{-1}-1} + O(\fr{1}{m})\right)^2 & \mbox{(\ref{eqn:firstorder}) and (\ref{eqn:firstmoment})}\nonumber\\
    & = \var(Y_m)\tau^{-2}(\E X)^{-2\tau^{-1}-2} + O(\fr{1}{m^2}),\nonumber
\intertext{where we note that $\E (Y_m-\E X)^2 = \var(Y_m)$ and $\E (Y_m-\E X) = 0$.  As $\var(Y_m)=\frac{1}{m}\var(X)$, 
this implies that the normalized variance is} 
    m \var (f(Y_m)) &= \var(X) \tau^{-2}(\E X)^{-2\tau^{-1}-2}+O(\fr{1}{m}).\label{eqn:lim-rel-var}
\end{align}

We can now obtain a strictly unbiased estimator
$(\E \hat{\lambda}_{\tau;m}^*[1])^{-1}\hat{\lambda}_{\tau;m}^*$,
where $\hat{\lambda}_{\tau;m}^*[1]$ is the output of the estimator at cardinality (density) 1.  We do not know precisely what $\E \hat{\lambda}_{\tau;m}^*[1]$ is, 
but $\lim_{m\rightarrow \infty} \E \hat{\lambda}_{\tau;m}^*[1] = (\E A_{1;\tau})^{-\tau^{-1}}$, 
so $\hat{\lambda}_{\tau;m} 
= (\E A_{1;\tau})^{\tau^{-1}}\hat{\lambda}_{\tau;m}^*$ is asymptotically unbiased, establishing Part (1).
Part (2) follows from Part (1) and Eqn~(\ref{eqn:lim-rel-var}).
Finally, observe that
$\var(\hat{\lambda}_{\tau;m}) 
= (\E A_{1;\tau})^{2\tau^{-1}}\var(\hat{\lambda}^*_{\tau;m})$.
Since $\lim_{m\rightarrow \infty} m\lambda^{-2}\var(\hat{\lambda}^*_{\tau;m}) 
= \tau^{-2}\,(\E A_{1;\tau})^{-2\tau^{-1}-2}\, \var(A_{1;\tau})$, 
we have
\[
\lim_{m\rightarrow\infty} m\lambda^{-2}\var(\hat{\lambda}_{\tau;m}) 
= 
\tau^{-2}\,(\E A_{1;\tau})^{-2}\, \var(A_{1;\tau}),
\]
proving Part (3).
\end{proof}

Theorems~\ref{thm:tscale} and \ref{thm:asymp} give us a simple recipe for 
calculating the limiting relative variance of \tGRA-based estimators.  
In Sections~\ref{sect:GRA-loglog} 
and \ref{sect:GRA-pcsa} we follow the following three-step process:

\begin{enumerate}
    \item Calculate the mean $\mu =\E A_{1;\tau}$ and the variance $\sigma^2 = \var(A_{1;\tau})$ of the $\tau$-generalized remaining area at density 1.
    \item By Theorem \ref{thm:tscale}, the induced estimator $\hat{\lambda}_{\tau;m}^*=\left(\frac{1}{m}\sum_{i=1}^m A^{(i)}_{\lambda;\tau}\right)^{-\tau^{-1}}$ is a scale-invariant estimator for $\lambda$, but possibly biased.
    \item After normalization, we get the estimator $\hat{\lambda}_{\tau;m}= \mu^{\tau^{-1}}\hat{\lambda}_{\tau;m}^*$ which is asymptotically unbiased. By Theorem \ref{thm:asymp}, its relative variance is asymptotically $\tau^{-2}\mu^{-2}\sigma^2/m$.
\end{enumerate}

\section{Generalized Remaining Area for the \LogLog{} 
Sketch}\label{sect:GRA-loglog}

Consider a \LogLog{} sketch consisting of $m$ subsketches (columns in Figure~\ref{fig:dartboard}),
and let us focus on one subsketch with offset $R$.
At cardinality $\lambda$, let $X_\lambda$ be the index (an integer) of the highest occupied cell in this subsketch. 
Recall that the $\tau$-generalized remaining area for this subsketch is summed up cell-by-cell:
\begin{align*}
     \sum_{i=X_\lambda+1}^\infty (2^{-i-R})^\tau = \frac{1}{2^\tau -1} 2^{-\tau(R+X_\lambda)} \propto 2^{-\tau(R+X_\lambda)}.
\end{align*}
Because the cell sizes decay geometrically, this is
linearly equivalent to taking the remaining area of the 
whole subsketch, $2^{-(R+X_{\lambda})}$, to the $\tau$th power.
For simplicity we calculate \tGRA{} in this way, summing over subsketches rather than cells, thereby avoiding the leading constant $1/(2^\tau-1)$.  
Thus, $A_{\lambda;\tau} = 2^{-\tau(R+X_\lambda)}$ 
is the contribution of this subsketch to the \tGRA.

Now we analyze $X_\lambda$. Fix an offset $r\in [0,1)$.
For any $x>0$, the event that $X_\lambda \leq x$ is the event that the cells at or above $\floor{x}+1$ are all free.  
The sum of the heights of those cells is equal to $2^{-(\floor{x}+r)}$
and they all have width $1/m$.
Thus at cardinality $\lambda$, the number of darts in those cells is $\mathrm{Poisson}(m^{-1}\lambda 2^{-\floor{x}+r})$. 
We have
\begin{align*}
    \pr(X_\lambda \leq x|R=r) = e^{-m^{-1}\lambda 2^{-(\floor{x}+r)}}.
\end{align*}

We then characterize the distribution of $A_{\lambda;\tau}=2^{-\tau(R+X_\lambda)}$  by the following lemma.
\begin{lemma}
For any $x>0$,
\begin{align*}
    \pr(A_{\lambda;\tau} \geq x) = \int_0^1 e^{-x^{\tau^{-1}} 2^rm^{-1} \lambda} \,dr.
\end{align*} 
\end{lemma}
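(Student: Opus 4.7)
\medskip

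\noindent\textbf{Proof plan.} The approach is to rewrite the event $\{A_{\lambda;\tau}\ge x\}$ as a threshold event for $R+X_\lambda$, condition on the continuous offset $R$, and then use the change-of-variable trick that makes smoothing work: adding a uniform $R\in[0,1)$ to an integer-valued $X_\lambda$ produces a random variable whose CDF can be written as an integral over a \emph{single} unit interval, annihilating any mod-$1$ dependence.

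\medskip

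\noindent\textbf{Step 1 (Reduction).} Since $A_{\lambda;\tau}=2^{-\tau(R+X_\lambda)}$ and $\tau>0$, the event $\{A_{\lambda;\tau}\ge x\}$ is exactly $\{R+X_\lambda\le t\}$, where $t\bydef -\tau^{-1}\log_2 x$. Write $t=n+s$ with $n=\floor{t}\in\Z$ and $s=t-n\in[0,1)$.

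\medskip

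\noindent\textbf{Step 2 (Condition on $R$).} Using the given conditional CDF
\[
\pr(X_\lambda\le y\mid R=r)=\exp\!\bigl(-m^{-1}\lambda\,2^{-(\floor{y}+r)}\bigr),
\]
I would integrate
\[
\pr(R+X_\lambda\le t)=\int_0^1 \exp\!\bigl(-m^{-1}\lambda\,2^{-(\floor{t-r}+r)}\bigr)\,dr.
\]
For $r\in[0,s]$ one has $\floor{t-r}=n$, and for $r\in(s,1)$ one has $\floor{t-r}=n-1$, so the integral splits as
\[
\int_0^s e^{-m^{-1}\lambda\,2^{-(n+r)}}\,dr \;+\; \int_s^1 e^{-m^{-1}\lambda\,2^{-(n-1+r)}}\,dr.
\]

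\medskip

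\noindent\textbf{Step 3 (Unit-interval consolidation).} In the second integral substitute $r'=r-1$, turning it into $\int_{s-1}^{0} e^{-m^{-1}\lambda\,2^{-(n+r')}}\,dr'$. Combining with the first integral gives the clean expression
\[
\pr(R+X_\lambda\le t)=\int_{s-1}^{s} e^{-m^{-1}\lambda\,2^{-(n+r')}}\,dr'.
\]
Now substitute $v=-(n+r')-(-t)=s-r'$, which maps the interval $[s-1,s]$ to $[0,1]$ and gives
\[
\pr(R+X_\lambda\le t)=\int_0^1 e^{-m^{-1}\lambda\,2^{-t}\,2^{v}}\,dv.
\]
Since $2^{-t}=2^{\tau^{-1}\log_2 x}=x^{\tau^{-1}}$, this is precisely $\int_0^1 e^{-x^{\tau^{-1}}\,2^{v}\,m^{-1}\lambda}\,dv$, as claimed.

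\medskip

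\noindent\textbf{Expected obstacle.} The only non-routine point is the bookkeeping around $\floor{t-r}$ when splitting the integral at $r=s$; this is where the uniform offset $R$ pays off, since after the substitution $r'=r-1$ the two pieces reunify into an integral over a length-one window and the dependence on $s$ (equivalently, the multiplicatively-periodic part of $\lambda$) disappears. Everything else is algebraic manipulation of the Poisson-void probability $e^{-m^{-1}\lambda\,2^{-(\floor{\cdot}+r)}}$.
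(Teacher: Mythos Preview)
Your proof is correct and follows essentially the same approach as the paper's own argument: both condition on $R=r$, split the resulting integral at the point where $\floor{t-r}$ drops by one, and then observe that the exponent $\floor{t-r}+r$ sweeps out a unit-length interval, allowing a single substitution to produce $\int_0^1 e^{-m^{-1}\lambda\,2^{-t+v}}\,dv$. Your version is simply more explicit about the two substitutions ($r'=r-1$, then $v=s-r'$), whereas the paper compresses this into the remark that ``$y$ iterates $(-\log_2(x^{\tau^{-1}})-1,-\log_2(x^{\tau^{-1}})]$ as $r$ iterates $[0,1)$.''
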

\begin{proof}
Let $x>0$ and $r\in[0,1)$. Then we have
\begin{align*}
    \pr(A_{\lambda;\tau} \geq x | R =r ) = \pr(X_\lambda \leq - \log_2(x^{\tau^{-1}})-r | R =r ) = \exp(-m^{-1}\lambda 2^{-(\floor{-\log_2(x^{\tau^{-1}})-r}+r)}).
\end{align*}
Set $y=\floor{-\log_2(x^{\tau^{-1}})-r}+r$ and $k=\floor{-\log_2 (x^{\tau^{-1}})}$. Then for $r\in [0,-\log_2(x^{\tau^{-1}})-k] $, we have $y = k +r$. For $r\in (-\log_2(x^{\tau^{-1}})-k,1)$, we have $y = k-1+r$. Therefore $y$ iterates $(-\log_2(x^{\tau^{-1}}) -1,-\log_2(x^{\tau^{-1}})]$ as $r$ iterates $[0,1)$. Therefore, we have
\begin{align*}
    \pr(A_{\lambda;\tau} \geq x ) &= \int_0^1 \pr(A_{\lambda;\tau}  \geq x | R=r)\,dr= \int_0^1\exp(-m^{-1}\lambda 2^{-(\floor{-\log_2 (x^{\tau^{-1}})-r}+r)})\,dr\\
    &= \int_0^1\exp(-m^{-1}\lambda 2^{\log_2 (x^{\tau^{-1}}) + r})\,dr=  \int_0^1e^{-x^{\tau^{-1}} 2^rm^{-1}\lambda}\,dr.
\end{align*}
\end{proof}

We now prove that $A_{\lambda;\tau}$ is $\tau$-scale-invariant.
\begin{lemma}
For any $\tau>0$, $A_{\lambda;\tau}$ is a 
$\tau$-scale-invariant estimator for $\lambda^{-\tau}$.
\end{lemma}
\begin{proof}
We need to prove that, for any $\tau,\lambda>0$, $ A_{\lambda;\tau} \sim \lambda^{-\tau} A_{1;\tau}$. Now note that, for any $x>0$, we have
\begin{align*}
    \pr(A_{\lambda;\tau} \geq x) & = \int_0^1 e^{-x^{\tau^{-1}} 2^r m^{-1}\lambda} \,dr = \int_0^1 e^{-(\lambda x^{\tau^{-1}}) 2^r \cdot m^{-1}} \,dr \\
    &= \pr(A_{1;\tau} \geq \lambda^{\tau} x) = \pr(\lambda^{-\tau} A_{1;\tau}\geq x).
\end{align*}
\end{proof}

Recall that $\Gamma$ is the continuous 
extension of the factorial function, with $\Gamma(n+1)=n!$ when $n\in \mathbb{N}$.  Its integral 
form is $\Gamma(z) = \int_0^\infty u^{z-1}e^{-u} du$.
\begin{proposition}\label{prop:ll}
For any $\tau>0$,
\begin{align*}
  m^{-\tau}\, \E A_{1;\tau} =   \Gamma(\tau) \frac{1-2^{-\tau}}{ \log 2},\quad \text{and}\quad m^{-2\tau} \, \var(A_{1;\tau}) =  \Gamma(2\tau) \frac{1-2^{-2\tau}}{ \log 2} - \left(\Gamma(\tau) \frac{1-2^{-\tau}}{\log 2}\right)^2.
\end{align*}
\end{proposition}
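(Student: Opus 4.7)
The plan is to compute $\E A_{1;\tau}^k$ once for general $k>0$ via the tail-integration formula $\E X^k = k\int_0^\infty x^{k-1}\pr(X\geq x)\,dx$, then specialize to $k=1$ and $k=2$; the variance follows from $\var(A_{1;\tau})=\E A_{1;\tau}^2-(\E A_{1;\tau})^2$. All the work sits in one substitution that turns the inner integral into a Gamma-integral.

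Starting from the preceding lemma, which gives $\pr(A_{1;\tau}\geq x)=\int_0^1 e^{-x^{\tau^{-1}}2^r m^{-1}}\,dr$, I would write
\begin{align*}
\E A_{1;\tau}^k &= \int_0^\infty k x^{k-1}\int_0^1 e^{-x^{\tau^{-1}}2^r m^{-1}}\,dr\,dx = \int_0^1\int_0^\infty k x^{k-1} e^{-x^{\tau^{-1}}2^r m^{-1}}\,dx\,dr,
\end{align*}
where swapping the order of integration is justified by non-negativity (Tonelli). For the inner $x$-integral, substitute $u = x^{\tau^{-1}}2^r m^{-1}$, so that $x=(um 2^{-r})^\tau$ and $dx = \tau(m 2^{-r})^\tau u^{\tau-1}\,du$. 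Then $k x^{k-1}\,dx = k\tau (m2^{-r})^{k\tau}\,u^{k\tau-1}\,du$, and the inner integral evaluates to
\begin{align*}
k\tau (m2^{-r})^{k\tau}\int_0^\infty u^{k\tau-1}e^{-u}\,du = k\tau\,\Gamma(k\tau)\,m^{k\tau}2^{-rk\tau} = \Gamma(k\tau+1)\,m^{k\tau}\,2^{-rk\tau}.
\end{align*}

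The remaining $r$-integral is elementary:
\begin{align*}
\E A_{1;\tau}^k = \Gamma(k\tau+1)\,m^{k\tau}\int_0^1 2^{-rk\tau}\,dr = \Gamma(k\tau+1)\,m^{k\tau}\cdot\frac{1-2^{-k\tau}}{k\tau\log 2} = \Gamma(k\tau)\,m^{k\tau}\,\frac{1-2^{-k\tau}}{\log 2}.
\end{align*}
Setting $k=1$ gives the first claimed identity, and $k=2$ gives $\E A_{1;\tau}^2 = \Gamma(2\tau)\,m^{2\tau}(1-2^{-2\tau})/\log 2$; subtracting $(\E A_{1;\tau})^2$ yields the variance formula.

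The only real obstacle is bookkeeping for the substitution — making sure the factor of $\tau$ from $dx$ combines with $\Gamma(k\tau)$ to produce $\Gamma(k\tau+1)$, which then cancels the $k\tau$ in the denominator to leave exactly $\Gamma(k\tau)/\log 2$. No complex-analytic machinery (Mellin transforms, periodic corrections, etc.) is needed here because Poissonization together with the continuous smoothing offset $R\sim\mathrm{Unif}[0,1)$ has already eliminated the residual discrete/periodic structure; the computation reduces to a clean one-dimensional Gamma-integral.
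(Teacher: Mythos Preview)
Your proof is correct and follows essentially the same route as the paper: both use the tail-integration formula for moments, swap the $r$- and $x$-integrals by Tonelli, and reduce the inner integral to $\Gamma(\cdot)$ via the substitution $u=x^{\tau^{-1}}2^r m^{-1}$. The only cosmetic difference is that the paper packages the computation into two named auxiliary integrals $\eta(a,b)$ and $\xi(q,b,c)$ and handles the second moment via $\E A_{1;\tau}^2=\int_0^\infty \pr(A_{1;\tau}\ge x^{1/2})\,dx$, whereas you parameterize by $k$ once using $\E X^k=k\int_0^\infty x^{k-1}\pr(X\ge x)\,dx$; the underlying substitution and algebra are identical.
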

\begin{proof}
We first prove some useful identities. 
Let $a,b >0$. Then
\begin{align*}
    \eta (a,b) \bydef \int_0^\infty e^{-a x^b} \,dx= \int_{0}^\infty e^{-t} a^{-1}b^{-1} (t/a)^{-\frac{b-1}{b}}dt=  a^{-b^{-1}}b^{-1}\Gamma(b^{-1}),
\end{align*}
where we set $t=ax^b$. Let $q,c>0$.
\begin{align*}
    \xi (q,b,c) &\bydef \int_0^\infty \int_0^1 e^{-q^r x^b c} \,drdx = \int_0^1 \eta(c q^r,b)\, dr = c^{-b^{-1}} b^{-1} \Gamma(b^{-1}) \int_0^1 q^{-rb^{-1}}\,dr \\
    &=c^{-b^{-1}}b^{-1}\Gamma(b^{-1}) \frac{1-q^{-b^{-1}}}{b^{-1}\log q} = c^{-b^{-1}} \Gamma(b^{-1}) \frac{1-q^{-b^{-1}}}{\log q}.
\end{align*}

The first and second moments can now be calculated as follows.
\begin{align*}
    \E A_{1;\tau} &= \int_0^\infty \pr(A_{1;\tau}\geq x) \, dx = \int_0^\infty \int_0^1 e^{-2^r x^{\tau^{-1}}m^{-1}}\,drdx \\
    &= \xi(2,\tau^{-1},m^{-1}) =m^{\tau} \Gamma(\tau) \frac{1-2^{-\tau}}{\log 2}
\intertext{Turning to the second moment,} 
    \E A_{1;\tau}^2 &= \int_0^\infty \pr(A_{1;\tau}^2\geq x) \,dx =\int_0^\infty \pr(A_{1;\tau}\geq x^{1/2}) \,dx\\
    &= \int_0^\infty \int_0^1 e^{-2^r x^{\tau^{-1}/2}m^{-1}}\,drdx = \xi\left(2,\frac{\tau^{-1}}{2},m^{-1}\right) \\
    &=m^{2\tau}\Gamma(2\tau) \frac{1-2^{-2\tau}}{ \log 2}.
\end{align*}

We obtain the following closed form expression of the variance.
\begin{align*}
    \var( A_{1;\tau})=\E A_{1;\tau}^2 - (\E A_{1;\tau})^2 =  m^{2\tau}\Gamma(2\tau) \frac{1-2^{-2\tau}}{ \log 2} - m^{2\tau} \left(\Gamma(\tau) \frac{1-2^{-\tau}}{\log 2}\right)^2.
\end{align*}
\end{proof}

\begin{theorem}[\tGRA{} for the \LogLog{} sketch]\label{thm:ll}
Let the offset vector $(R_i)\in [0,1)^m$ be selected uniformly at random.  Let $X_\lambda^{(i)}$ be the integer index of the highest one in the $i$th subsketch after $\lambda$ insertions.
Then for any $\tau>0$,
\begin{align*}
     \hat{\lambda}_{\tau;m} = m \left(\Gamma(\tau) \frac{1-2^{-\tau}}{ \log 2}\right)^{\tau^{-1}} \left(\frac{1}{m}\sum_{i=1}^m 2^{-\tau(R_i+X_\lambda^{(i)})}\right)^{-\tau^{-1}}
\end{align*}
is a scale-invariant estimator for $\lambda$ that is asymptotically unbiased. The asymptotic normalized relative variance is 
\begin{align*}
    \lim_{m\to \infty} m\lambda^{-2} \var ( \hat{\lambda}_{\tau;m}) = \tau^{-2}\left(\frac{\Gamma(2\tau)\log 2}{\Gamma(\tau)^2} \cdot \frac{1+2^{-\tau}}{1-2^{-\tau}} -1\right).
\end{align*}
\end{theorem}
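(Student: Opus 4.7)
The plan is to verify that this theorem is essentially a direct plug-in of the ingredients already assembled, following the three-step recipe stated at the end of Section~\ref{sect:gra}. Let me trace through each step.

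First, I would invoke Proposition~\ref{prop:ll}, which supplies the two moments I need:
\[
\mu \bydef \E A_{1;\tau} = m^{\tau}\,\Gamma(\tau)\,\frac{1-2^{-\tau}}{\log 2},\qquad
\sigma^2\bydef \var(A_{1;\tau}) = m^{2\tau}\Gamma(2\tau)\frac{1-2^{-2\tau}}{\log 2} - \mu^2.
\]
Second, I would note that $A_{\lambda;\tau} = 2^{-\tau(R+X_\lambda)}$ was already shown to be $\tau$-scale-invariant in the preceding lemma. Hence Theorem~\ref{thm:tscale} applies, so the raw estimator
$\hat{\lambda}^*_{\tau;m} = \bigl(\frac{1}{m}\sum_i A^{(i)}_{\lambda;\tau}\bigr)^{-\tau^{-1}}$ is scale-invariant.

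Third, I would multiply by the normalizing factor $\mu^{\tau^{-1}}$ dictated by Theorem~\ref{thm:asymp}(3), yielding
\[
\hat{\lambda}_{\tau;m} = \mu^{\tau^{-1}}\hat{\lambda}^*_{\tau;m} = m\!\left(\Gamma(\tau)\frac{1-2^{-\tau}}{\log 2}\right)^{\!\tau^{-1}}\!\left(\frac{1}{m}\sum_{i=1}^m 2^{-\tau(R_i+X_\lambda^{(i)})}\right)^{\!-\tau^{-1}},
\]
where the stray $m^{\tau\cdot\tau^{-1}}=m$ out front accounts for the $m^{\tau}$ inside $\mu$. This is exactly the claimed form, and asymptotic unbiasedness is immediate from Theorem~\ref{thm:asymp}(1) together with the normalization.

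For the variance formula, Theorem~\ref{thm:asymp}(3) gives
$\lim_{m\to\infty} m\lambda^{-2}\var(\hat{\lambda}_{\tau;m}) = \tau^{-2}\,\mu^{-2}\sigma^2 = \tau^{-2}\bigl(\mu^{-2}\E A_{1;\tau}^2 - 1\bigr)$, and the $m$-dependence cancels because $\E A_{1;\tau}^2$ scales as $m^{2\tau}$ and $\mu^2$ also scales as $m^{2\tau}$. Substituting the Proposition~\ref{prop:ll} values and factoring $1-2^{-2\tau} = (1-2^{-\tau})(1+2^{-\tau})$ collapses the ratio to $\frac{\Gamma(2\tau)\log 2}{\Gamma(\tau)^2}\cdot\frac{1+2^{-\tau}}{1-2^{-\tau}}$, producing the stated closed form. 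The only thing to be careful about is that the $m$-factors match up correctly between the mean formula and the normalization, and that the algebraic simplification is carried out; there is no substantive obstacle beyond bookkeeping since all the analytic heavy lifting (Poissonization, $\tau$-scale-invariance, the $\eta$/$\xi$ integrals, and the delta-method first-moment expansion of Theorem~\ref{thm:asymp}) has already been done.
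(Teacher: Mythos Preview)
Your proposal is correct and follows exactly the paper's approach: the paper's own proof is the one-liner ``This follows directly from Theorem~\ref{thm:asymp} and Proposition~\ref{prop:ll},'' and you have simply spelled out the plug-in and algebraic simplification that this entails.
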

\begin{proof}
This follows directly from Theorem~\ref{thm:asymp} and Proposition \ref{prop:ll}.
\end{proof}
\begin{remark}
The celebrated estimator 
$\hat{\lambda}_{\operatorname{FFGM}}$ of \HyperLogLog{} corresponds to $\tau = 1$.
Inserting $\tau=1$ to the variance formula, we have $\Gamma(2)=\Gamma(1)=1$ and the leading constant of the variance is $3\log 2 -1\approx 1.07944$. The bias term at $\tau =1$ is $\frac{1}{2\log 2}$, which match the constants 
from Flajolet et al.~\cite{FlajoletFGM07} as $m\to \infty$.
\end{remark}

\begin{remark}
Note that for any $x_1,x_2,\ldots,x_m>0$, $\lim_{\tau\to 0} \left(\frac{1}{m}\sum_{i=1}^m x_1^{-\tau}\right)^{-\tau^{-1}} = \left(\prod_{i=1}^m x_i\right)^{m^{-1}}$, i.e., the $\tau$-mean converges towards the geometric mean as $\tau\rightarrow 0$. In other words, Durand and Flajolet's estimator $\hat{\lambda}_{\operatorname{DF}}$ for \LogLog{}
corresponds to $\GRA{0}$. 
We have the normalized relative variance\footnote{This calculation is done in the algebraic system \emph{Mathematica}.}
\begin{align*}
   & \lim_{\tau\to 0 }\tau^{-2}\left(\frac{\Gamma(2\tau)\log 2}{\Gamma(\tau)^2} \cdot \frac{1+2^{-\tau}}{1-2^{-\tau}} -1\right) =  \frac{2 \pi^2 + \log^2 2}{12} \approx 1.68497,
\end{align*}
which matches the limiting constant calculated by 
Durand and Flajolet~\cite{DurandF03,Durand04}.
\end{remark}

See Figure \ref{fig:vt_hll} as a visualization of the relative variance of the \tGRA{} estimators for the \LogLog{} sketch. 
By numerical optimization, the minimal variance 1.07507 is obtained at $\tau^*=0.889897$.
This comes quite close to the Cram\'{e}r-Rao lower bound for \LogLog{} sketches,
which Pettie and Wang~\cite{PettieW21} computed to be $\frac{\log 2}{\pi^2/6-1}\approx 1.07475$.

\section{Generalized Remaining Area for the \PCSA{} Sketch}\label{sect:GRA-pcsa}

Consider a \PCSA{} sketch with $m$ subsketches.
Due to Poissonization, the sketch consists of a set 
of independent indicator variables corresponding to whether
each cell has been hit by at least one dart.
To simplify notation, let $X(t)$ be a ``fresh'' 
binary random variable such that
\begin{align*}
    X(t) = \begin{cases}
    0,&\text{with probability }e^{-t}\\
    1,&\text{with probability }1-e^{-t}.
    \end{cases}
\end{align*}

Consider one subsketch with offset $R$.
Cell $i$ has height $2^{-(i+R)}$ and width $1/m$. 
At cardinality $\lambda$
the number of points in the cell is $\mathrm{Poisson}(m^{-1}\lambda 2^{-(i+R)})$. 
Thus the bit vector representing this subsketch
distributes identically with 
$(X(m^{-1}\lambda 2^{-(i+R)}))_{i\in \Z}$.
Let $\ind{\mathcal{E}}$ denote the indicator variable 
for event $\mathcal{E}$.
The $\tau$-generalized remaining area for 
the \PCSA{} sketch is then defined as follows.
\begin{align*}
    A_{\lambda;\tau} = \sum_{i\in \Z } \ind{X(m^{-1}\lambda 2^{-(i+R)})=0}2^{-(i+R)\tau}.
\end{align*}

\begin{lemma}
For any $\tau>0$, $A_{\lambda;\tau}$ is a 
$\tau$-scale-invariant estimator for $\lambda^{-\tau}$.
\end{lemma}

\begin{proof}
We need to prove that for any $\lambda>0$, $A_{\lambda;\tau}\sim \lambda^{-\tau} A_{1;\tau}$. Note that 
\begin{align*}
 A_{\lambda;\tau} &=\sum_{i\in\Z} \ind{X(m^{-1}\lambda 2^{-(i+R)})=0}2^{-\tau(i+R)}\\
 &=\lambda^{-\tau} \sum_{i\in\Z} \ind{X( m^{-1}2^{-(i+R-\log_2\lambda)})=0}2^{-\tau(i+R-\log_2\lambda)}
 \intertext{Note that because $R$ is uniform over $[0,1)$ and we are summing over $\mathbb{Z}$, this sum is invariant under shifts, e.g., by $\log_2 \lambda$.  Continuing,}
A_{\lambda;\tau} & \sim \lambda^{-\tau} \sum_{i\in\Z} \ind{X( m^{-1}2^{-(i+R)})=0}2^{-\tau(i+R)} = \lambda^{-\tau} A_{1;\tau}.
\end{align*}
\end{proof}

In contrast to our smoothing of (\textsf{Hyper})\LogLog, 
it actually \emph{does} matter that we use the 
uniform offset vector $\mathbf{R} = (0,1/m,\ldots,(m-1)/m)$
rather than random offsets.  Random offsets
would introduce subtle correlations between cells in the same column, and increase the variance by some tiny constant.
Uniform offsets have the property that there is a cell
of size $2^{-i/m}$ for every $i\in\mathbb{Z}$, so the conceptual organization of cells into columns is no longer relevant.

\begin{proposition}\label{prop:pcsa}
Let $A_{1;\tau}^{(1)},A_{1;\tau}^{(2)},\ldots,A_{1;\tau}^{(m)}$ be the \tGRA{} of $m$ subsketches with uniform offsetting.
For $\tau>0$,
\begin{align*}
   \lim_{m\to \infty} m^{-1-\tau}\sum_{i=1}^m\E(A_{1;\tau}^{(i)}) =\frac{\Gamma(\tau)}{\log 2}, \quad \text{and}\quad \lim_{m\to \infty}m^{-1-2\tau}\sum_{i=1}^m\var(A_{1;\tau}^{(i)}) = \frac{(1-2^{-2\tau})\Gamma(2\tau)}{\log 2}.
\end{align*}
\end{proposition}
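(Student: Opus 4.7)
The plan is to exploit the uniform offset structure $\mathbf{R} = (0, 1/m, 2/m, \ldots, (m-1)/m)$ to amalgamate all cells across all $m$ subsketches into a single family indexed by $k \in \mathbb{Z}$. The map $(i, j) \mapsto k := mi + (j-1)$ is a bijection from $\mathbb{Z} \times \{1,\ldots,m\}$ onto $\mathbb{Z}$ under which $i + (j-1)/m = k/m$, so the $k$th cell has height $2^{-k/m}$ and width $1/m$. Setting $u_k := m^{-1} 2^{-k/m}$ and using Poissonization to ensure the indicator variables $Y_k := \ind{X(u_k) = 0}$ are mutually independent across all cells of all subsketches, I can collapse the double sum into
\[
\sum_{j=1}^m A_{1;\tau}^{(j)} \;=\; \sum_{k \in \mathbb{Z}} Y_k \cdot 2^{-k\tau/m} \;=\; m^\tau \sum_{k \in \mathbb{Z}} Y_k \, u_k^\tau,
\]
with $\E Y_k = e^{-u_k}$ and $\var(Y_k) = e^{-u_k}(1 - e^{-u_k})$. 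Since the subsketches are independent, the sum of variances equals the variance of the sum, giving $\sum_{j=1}^m \var(A_{1;\tau}^{(j)}) = m^{2\tau} \sum_{k \in \mathbb{Z}} e^{-u_k}(1 - e^{-u_k}) u_k^{2\tau}$.

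The next step is to recognize each deterministic sum as a geometric Riemann sum. The points $u_k$ form a geometric progression with ratio $2^{-1/m}$, so $u_{k-1} - u_k = u_k(2^{1/m} - 1) = u_k \cdot \frac{\log 2}{m}(1 + O(1/m))$. Consequently, for any sufficiently well-behaved $h : (0, \infty) \to \mathbb{R}$,
\[
\sum_{k \in \mathbb{Z}} h(u_k)\, u_k \;=\; \frac{1}{2^{1/m} - 1} \sum_{k \in \mathbb{Z}} h(u_k) (u_{k-1} - u_k) \;=\; \frac{m}{\log 2} \int_0^\infty h(u)\, du \cdot (1 + o(1)).
\]
Applying this with $h(u) = e^{-u} u^{\tau - 1}$ yields $\sum_k e^{-u_k} u_k^\tau = \frac{m \Gamma(\tau)}{\log 2}(1 + o(1))$, hence $\sum_j \E A_{1;\tau}^{(j)} = \frac{m^{1+\tau} \Gamma(\tau)}{\log 2}(1 + o(1))$. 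Applying the same estimate with $h(u) = (e^{-u} - e^{-2u}) u^{2\tau - 1}$, together with $\int_0^\infty e^{-au} u^{2\tau - 1}\, du = a^{-2\tau} \Gamma(2\tau)$, gives $\int_0^\infty h(u)\, du = (1 - 2^{-2\tau}) \Gamma(2\tau)$, hence $\sum_j \var(A_{1;\tau}^{(j)}) = \frac{m^{1+2\tau}(1 - 2^{-2\tau}) \Gamma(2\tau)}{\log 2}(1 + o(1))$. Dividing by $m^{1+\tau}$ and $m^{1+2\tau}$ respectively yields the two claimed limits.

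The main obstacle is justifying the Riemann-sum-to-integral passage on the noncompact domain $(0, \infty)$. I would truncate to indices $k$ with $u_k \in [m^{-c}, m^c]$ for a large constant $c > 0$; in the logarithmic coordinate $x = k/m$ this corresponds to a window of length $O(\log m)$, on which the smooth integrand is uniformly approximated by its Riemann sum with mesh $1/m$. The small-$u$ tail ($u_k < m^{-c}$) is a geometric series of total weight $O(m^{1 - c\tau})$ (resp.\ $O(m^{1 - 2c\tau})$), which is negligible after dividing by $m^{1+\tau}$ (resp.\ $m^{1+2\tau}$) for any fixed $c > 0$; the large-$u$ tail ($u_k > m^c$) is even smaller, dominated by $\exp(-m^c)$. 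Taking $c$ sufficiently large and passing $m \to \infty$ completes the argument.
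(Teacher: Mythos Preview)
Your proposal is correct and follows essentially the same approach as the paper: both use the uniform offsets to merge all cells across subsketches into a single $\mathbb{Z}$-indexed family, invoke Poissonization for cell-wise independence, and pass to a Riemann-sum limit yielding the Gamma integral. The only cosmetic difference is the coordinate---the paper works in the logarithmic variable $t=k/m$ (integrating over $\mathbb{R}$ and then substituting $u=2^{-t}$ inside its $\psi$-identity), whereas you work directly in the multiplicative coordinate $u_k$; your explicit tail-truncation argument in fact supplies more detail than the paper's one-line claim that the shift by $\log_2 m$ ``affects the sum vanishingly.''
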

\begin{proof}
First note the following identity. Assume $q>1,\tau>0,\lambda>0$. Then
\begin{align*}
   \psi(\lambda,q,\tau)= \int_{-\infty}^\infty e^{-q^{-x}\lambda}q^{-\tau x}\,dx= \int_{0}^\infty e^{-t}(t/\lambda)^\tau (t\log q)^{-1} \,dt = \frac{1}{\log q}\lambda^{-\tau}\Gamma(\tau).
\end{align*}
Here $t=q^{-x}$. After uniform offsetting, a \PCSA{} sketch with $m$ subsketches have cells of size $2^{-i/m}$ for all $i\in \Z$. 
Thus we we have
\begin{align*}
   \lim_{m\to \infty}m^{-1-\tau}\sum_{i=1}^m\E\left(A_{1;\tau}^{(i)}\right) &= \lim_{m\to \infty}m^{-1}\sum_{i\in \Z} \E \left(\ind{X(m^{-1}2^{-i/m})=0}(m^{-1}2^{- (i/m)})^\tau\right) \\ 
   \intertext{Setting $h(t) = \E \left(\ind{X(2^{-t})=0}(2^{- t})^\tau\right) = e^{-2^{-t}}2^{-\tau t}$, the sum becomes $m^{-1}\sum_{i\in\Z}h(i/m+\log_2 m)$. 
   Since we are summing over $\mathbb{Z}$, 
   the shift $\log_2 m$ in the argument affects the sum vanishingly as $m\to \infty$. Thus $\lim_{m\to\infty}m^{-1}\sum_{i\in\Z}h(i/m+\log_2 m) = \lim_{m\to\infty}m^{-1}\sum_{i\in\Z}h(i/m) =\int_{-\infty}^\infty h(t)\,dt$. Thus,}
  \lim_{m\to \infty}m^{-1-\tau}\sum_{i=1}^m\E\left(A_{1;\tau}^{(i)}\right) &=\int_{-\infty}^\infty e^{-2^{-t}}2^{-\tau t}\, dt =  \psi(1,2,\tau) =  \frac{\Gamma(\tau)}{\log 2}.
\end{align*}
Note that by Poissonization, cells are independent and thus all
co-variances are zero.
\begin{align*}
    \lim_{m\to \infty}m^{-1-2\tau}\sum_{i=1}^m\var(A_{1;\tau}^{(i)})&= \lim_{m\to \infty}m^{-1} \sum_{i\in \Z}  \var \left(\ind{X(m^{-1}2^{-i/m})=0}(m^{-1}2^{-i/m})^\tau\right) 
    \intertext{by the same limiting argument laid out above, this is equal to}
    &=\int_{-\infty}^\infty  \var(\ind{X(2^{-t})=0}2^{-\tau t})\,dt\\ &=\int_{-\infty}^\infty e^{- 2^{-t}}2^{-2\tau t} - e^{-2 \cdot  2^{-t}}2^{-2 \tau t}\,dt \\
    &= \psi(1,2,2\tau)-\psi(2,2,2\tau)=\frac{\Gamma(2\tau)}{\log 2}(1-2^{-2\tau}).
\end{align*}
\end{proof}

\begin{theorem}[\tGRA{} for the \PCSA{} sketch]\label{thm:pcsa-tGRA}
Let  $A_{\lambda;\tau}^{(i)}$ be the $\tau$-generalized 
remaining area of the $i$th subsketch with uniform offsetting,
and $A = \sum_{i=1}^m A_{\lambda;\tau}^{(i)}$ be the \tGRA.
Then for any $\tau>0$,
\begin{align*}
    \hat{\lambda}_{\tau;m} = m  \left(\frac{\Gamma(\tau)}{\log 2}\right)^{\tau^{-1}} \left(\frac{A}{m}\right)^{-\tau^{-1}}
\end{align*}
is a scale-invariant estimator for $\lambda$ that is asymptotically unbiased. The asymptotic normalized relative variance is 
\begin{align*}
    \lim_{m\to \infty} m\lambda^{-2} \var ( \hat{\lambda}_{\tau;m}) =   \frac{(1-2^{-2\tau})\Gamma(2\tau)\log 2}{\tau^2 \Gamma(\tau)^2}.
\end{align*}
\end{theorem}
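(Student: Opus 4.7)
The plan is to execute the three-step recipe spelled out after Theorem~\ref{thm:asymp}, using machinery already assembled in the preceding lemma and Proposition~\ref{prop:pcsa}. The only mild wrinkle is that for \PCSA{} with the uniform offset vector $(0,1/m,\ldots,(m-1)/m)$ the subsketches $A^{(i)}_{1;\tau}$ are independent but \emph{not} identically distributed, so Theorems~\ref{thm:tscale} and~\ref{thm:asymp} as stated do not apply verbatim; their proofs, however, extend essentially unchanged.

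First I would promote per-subsketch $\tau$-scale-invariance to scale-invariance of $\hat{\lambda}_{\tau;m}$. The lemma immediately above gives $A^{(i)}_{\lambda;\tau}\sim \lambda^{-\tau} A^{(i)}_{1;\tau}$ for each fixed $i$, and by Poissonization the vector $(A^{(i)}_{\lambda;\tau})_{i\in[m]}$ is a product measure, so jointly $A=\sum_i A^{(i)}_{\lambda;\tau}\sim \lambda^{-\tau}\cdot (A|_{\lambda=1})$. The one-line rescaling from the proof of Theorem~\ref{thm:tscale} then yields $\hat{\lambda}_{\tau;m}[\lambda]\sim \lambda\cdot\hat{\lambda}_{\tau;m}[1]$.

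For the asymptotic moments I would rescale to $T_m:=A/m^{1+\tau}$, for which Proposition~\ref{prop:pcsa} gives $\E T_m\to \mu_\infty:=\Gamma(\tau)/\log 2$ and $m\cdot \var(T_m)\to \sigma_\infty^2:=(1-2^{-2\tau})\Gamma(2\tau)/\log 2$ (Poissonization zeroes out all covariances, so variance is the sum of per-subsketch variances). Writing $\hat{\lambda}_{\tau;m}[1]=\mu_\infty^{\tau^{-1}}T_m^{-\tau^{-1}}$ and first-order Taylor expanding $f(x)=x^{-\tau^{-1}}$ about $\mu_\infty$---exactly as in the proof of Theorem~\ref{thm:asymp}, with a Lindeberg-type CLT for the independent non-identical summands $A^{(i)}_{1;\tau}$ localizing $T_m$ within an $O((\log m)/\sqrt{m})$ window of $\mu_\infty$---one obtains $\E\hat{\lambda}_{\tau;m}[1]\to 1$ (asymptotic unbiasedness) and
\[
m\cdot\var(\hat{\lambda}_{\tau;m}[1])\;\to\;\mu_\infty^{2\tau^{-1}}\cdot\tau^{-2}\mu_\infty^{-2\tau^{-1}-2}\sigma_\infty^2\;=\;\tau^{-2}\mu_\infty^{-2}\sigma_\infty^2.
\]
Substituting the closed forms of $\mu_\infty$ and $\sigma_\infty^2$ collapses this to $(1-2^{-2\tau})\Gamma(2\tau)\log 2/(\tau^2\Gamma(\tau)^2)$, the formula in the statement; invoking scale-invariance then removes the factor $\lambda^{-2}$.

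The only genuine obstacle is verifying that the delta-method argument of Theorem~\ref{thm:asymp} survives the loss of the i.i.d.~assumption. This reduces to checking (a) concentration of $T_m$ at rate $m^{-1/2}$ and (b) that the $O((T_m-\mu_\infty)^2)$ remainder in the Taylor expansion contributes $O(1/m)$ to $\E f(T_m)$ and $\var f(T_m)$. Both follow from independence plus the uniform control on per-subsketch second moments already implicit in Proposition~\ref{prop:pcsa}, so no additional structural ingredient is required---everything else is bookkeeping.
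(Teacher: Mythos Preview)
Your argument is essentially the paper's one-line proof (``follows from Theorem~\ref{thm:asymp} and Proposition~\ref{prop:pcsa}'') unpacked, and you are more honest than the paper in flagging that Theorem~\ref{thm:asymp} is stated for i.i.d.\ summands whereas uniform offsetting makes the $A^{(i)}_{1;\tau}$ merely independent; your Lindeberg-plus-delta-method patch is the right fix, and the variance bookkeeping via $T_m=A/m^{1+\tau}$ is clean and correct.

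One small inaccuracy to repair: the preceding lemma proves $\tau$-scale-invariance using that $R$ is \emph{uniformly random} on $[0,1)$, so it does \emph{not} give $A^{(i)}_{\lambda;\tau}\sim\lambda^{-\tau}A^{(i)}_{1;\tau}$ once you fix $R_i=(i-1)/m$ deterministically. The correct route---which the paper hints at in the paragraph just before Proposition~\ref{prop:pcsa}---is to argue scale-invariance for the \emph{aggregate} $A$: under uniform offsetting the full cell collection is $\{2^{-k/m}:k\in\Z\}$, which is invariant under multiplication by $2^{j/m}$, so $A[\lambda]\sim\lambda^{-\tau}A[1]$ exactly for $\lambda\in 2^{\Z/m}$ and up to a vanishing period-$2^{1/m}$ discrepancy otherwise. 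This does not disturb the asymptotic-variance conclusion, but your per-subsketch invocation of the lemma should be replaced by this global argument.
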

\begin{proof}
This follows directly from Theorem~\ref{thm:asymp} and Proposition \ref{prop:pcsa}.
\end{proof}
\begin{remark}
The remaining area estimator (1-\textsf{GRA}) has normalized relative variance $\frac{(1-2^{-2})\Gamma(2)}{1^2\Gamma(1)^2}\log 2 = \frac{3}{4}\log 2 \approx 0.51986$, which is better
than Flajolet and Martin's original ``first zero'' estimator $\hat{\lambda}_{\operatorname{FM}}$.
\end{remark}

\begin{remark}
As $\tau$ goes to $0$, $\hat{\lambda}_{\tau;m}$ is essentially 
counting the number of free cells (0s in the sketch), 
which corresponds to Lang's~\cite{Lang17} ``coupon collector'' estimator $\hat{\lambda}_{\operatorname{Lang}}$ that counts 
occupied cells (1s in the sketch).
The limiting variance of this estimator is\footnote{This calculation is done in the algebraic system \emph{Mathematica}.}
\begin{align*}
    \lim_{\tau\to 0} \frac{(1-2^{-2\tau})\Gamma(2\tau)\log 2}{\tau^2 \Gamma(\tau)^2}  = \log^22 \approx 0.480453,
\end{align*}
which confirms Lang's~\cite{Lang17} back-of-the-envelope calculation that it should be $\log^2 2$.
\end{remark}

See Figure \ref{fig:vt_pcsa} for a visualization of the relative variance of the \tGRA{} estimators for the \PCSA{} sketch. By numerical optimization, the minimal variance 0.435532 is obtained at $\tau^*=0.343557$.  This comes very close to the Cram\'er-Rao lower bound of $\frac{\pi^2}{6\log 2} \approx 0.42138$ for \PCSA{} sketches, as computed in~\cite{PettieW21}.

\section{Conclusion}\label{sect:conclusion}

We introduced a class of estimators based on the concept 
of \emph{$\tau$-generalized remaining area}, 
which has significant explanatory power, as it subsumes many existing estimators~\cite{DurandF03,FlajoletFGM07,Lang17}.
This concept is quite powerful, and allows us 
to \emph{almost} completely close the gap between 
the best explicit (non-MLE) estimators 
for \HyperLogLog~\cite{FlajoletFGM07} 
and \PCSA~\cite{Lang17}
and their respective 
Cram\'er-Rao lower bounds~\cite{PettieW21}. 
See Figures~\ref{fig:vt_hll} and \ref{fig:vt_pcsa} and Table~\ref{tab:results}.

\medskip 

One distinction of our proofs is that 
they are very precise, but
assume only basic probability and calculus.
They avoid the daunting complexity of a Flajolet-style analysis~\cite{FlajoletM85,Flajolet90,DurandF03,FlajoletFGM07}.
The key ingredients in our approach 
are 
(i) a deliberate simplification of the probabilistic model (see Definitions~\ref{def:scale-invariance}, 
\ref{def:smoothedPoisson}, and
\ref{def:tau-scale-invariance}),
and (ii) restricting the analysis 
to \emph{limiting} relative variance rather than try to understand the variance at \emph{every} value of $m$.\footnote{In practice $m$ is typically between $2^{8}$ and $2^{14}$, which
is already in the asymptotic regime.  A Flajolet-style analysis---which 
obtains bias correction constants and a precise variance when $m$ is \emph{fixed}---is 
most helpful when $m$ is a small constant.}

\medskip 

We hope that our analysis will make the popular and efficient 
cardinality sketches accessible to students, at least at the graduate level.  At present, 
courses on Big Data/Subliner Algorithms~\cite{Price-F20,Chekuri-F14,Yaroslavtsev-F15,Chakrabarti-S20,Vassilvitskii-S11,Nelson-F20,McGregor-S18,Musco-F21,Woodruff-F20,Mahabadi-S21,IndykR-S13} usually 
cover Cardinality Estimation/Distinct Elements, 
but avoid \HyperLogLog, \PCSA, and related sketches in favor of sketches with simpler analyses.

\bibliographystyle{alpha}
\bibliography{ref}
\end{document}